\newtheorem{remark}{Remark}
\newtheorem{theorem}{Theorem}
\newtheorem{proof}{Proof}
\newcommand*{\SavedEqref}{}
\let\SavedEqref\eqref
\renewcommand*{\eqref}[1]{%
	\begingroup
	\hypersetup{
		linkcolor=linkequation,
		linkbordercolor=linkequation,
	}%
	\SavedEqref{#1}%
	\endgroup
}
\DeclareMathOperator{\Tr}{Tr}
\begin{document}
	\title{A Probabilistic Framework for Controlling Quantum Systems}
	
	\author{ Randa Herzallah\thanks{\textcolor{blue}{herzallah.r@gmail.com}}\hspace{0.2cm} and\hspace{0.1cm}Abdessamad Belfakir\thanks{\textcolor{blue}{abdobelfakir01@gmail.com}}\\Aston Institute for Urban Technologies and the Environment, College of
		Engineering and Applied Science,\\ Aston University, Aston Triangle, Birmingham B4 7ET,
		UK.}
	
	\date{\vspace{-5ex}}
	\maketitle
	\begin{abstract}
A new control method that considers all sources of uncertainty and noises that might affect the time evolutions of quantum physical systems is introduced. Under the proposed approach, the dynamics of quantum systems are characterised by probability density functions (pdfs), thus providing a complete description of their time evolution. Using this probabilistic description, the proposed method suggests the minimisation of the distance between the actual pdf that describes the joint distribution of the time evolution of the quantum system and the external electric field, and the desired pdf that describes the system target outcome. We start by providing the control solution for quantum systems that are characterised by arbitrary pdfs. The obtained solution is then applied to quantum physical  systems characterised by Gaussian pdfs and the form of the optimised controller is elaborated. Finally, the proposed approach is demonstrated on a real molecular system and two spin systems showing the effectiveness and simplicity of the method.  
	\end{abstract}
	{\quotation\noindent{\bf Keywords:}  Probabilistic control, 
		Kullback-Leibler Divergence (KLD), quantum control.
		
		\endquotation}
	
\section{Introduction }\label{sec1}
The idea of controlling quantum physical systems dates back to the end of last century  when laser technology was invented \cite{Brumer,Kawashima,Weiner,Kosloff89,Jortner,Tannorand,Tannor}. Since it lies at the heart of  all quantum information theory and quantum technology \cite{Ramos,Krausz,Silberberg,Rego,Ohmori}, this field has been rapidly growing in recent years both theoretically and experimentally. Theoretically, several approaches to control quantum physical systems were developed including optimal control theory \cite{Werschnik,Zhu2000},  Lyapunov
control approaches \cite{Mirrahimi,Kuang,Hou}, learning control algorithms \cite{DaoyiDong}, and robust control methods \cite{Koswara}. The main objective in optimal quantum control  is to design a control input that renders a  closed-loop quantum system stable and optimizes an associated predefined cost function \cite{Kosloff89}. In other words, the objective is to find a controller that transforms the quantum system from an initial state to a predefined desired one \cite{Kosloff89}. To achieve the control objective, a cost function needs to be optimised which is conventionally taken to be the expected value of a predefined target operator\cite{Kosloff89}. Consequently, several numerical algorithms allowing the optimisation of quantum control systems were introduced such as the Krotov method \cite{Kazakov} and the  the rapid monotonically convergent iteration methods introduced first by   Rabitz and his co-workers in a serie of papers\cite{RBZ98,RBZ98_2,Conju91,Peirce98}. These algorithms were first applied to manipulate the transition probability between the bound states of molecular systems \cite{RBZ98} and then extended to other target operators\cite{RBZ98_2}.

A Feedback control method for quantum physical systems using the  best estimation of the dynamical variables from the measurement record was implemented in \cite{Doherty}. In this scheme the best estimates of quantum system variables are used to fully control the system when the measurement noise is assumed to be Gaussian. With this assumption, the quantum mechanical cost function was shown to have the same linear and quadratic form as the one of classical systems\cite{Doherty}. This rendered the application of  the classical linear quadratic Gaussian control theory to quantum systems\cite{Doherty,Jiang} possible. However, despite of  the success of this method and its important applications, it is not general and is only  applied for particular systems under particular assumptions.  Furthermore, due to their primordial applications in quantum optics and quantum information theory, recent advances in quantum control have also been directed towards the control of spin systems \cite{spin,spin_2}. In \cite{spin_2}, the radio-frequency pulses were implemented in coherent spectroscopy in order to control the transfer between the states of multiple spin systems. Moreover, a method to control an ensemble of spin systems that are uniformly distributed was introduced and it was shown that the approach can be extended to Gaussian distributions \cite{spin}. 

However, almost all of the aforementioned methods considered a deterministic cost function to optimise the system and design a suitable control strategy. Thus, these methods do not take into consideration the effect of  noises and uncertainties, that are inherent to real world quantum physical systems, in the design of the control strategy. Consequently, this paper presents a fundamentally different approach to current state of the art of quantum control, that provides a robust control method by considering the different sources of uncertainties affecting  quantum physical systems operating under real noisy and uncertain conditions. The main novelty of our approach is that it defines the cost function of the quantum control problem as the distance between the probabilistic description of the joint distribution of the quantum system dynamics and external electric field, and a predefined desired joint distribution that characterises the desired outcome of the controlled quantum system. Within the proposed method, the time evolution of quantum systems are characterised by probability density functions (pdfs), hence yielding a complete characterisation of the system evolution under noisy and uncertain conditions. This overcomes many of the drawbacks of the existing quantum control methods and allows the consideration of the quantum system uncertain information in the design of the optimal control strategy. The use of the distance measure between pdfs to optimise optimal control laws is a well established method in classical control theory \cite{Karny_1,RH_2011,RH_2013,RH_2015,RH_2018,RH_2020}. This method will be extended here to control atomic scale systems.

To reemphasise, the advantages of the approach presented in this paper are as follows: Firstly, the designed optimal control law obtained from this method takes into consideration all sources of noises and uncertainties such as sensors noises and measurement uncertainties which may affect the evolution of real world  quantum physical systems. Secondly, in contrast to existing methods in the field, a closed form solution for the optimal control law is obtained for quantum systems described by arbitrary pdfs. This yields an analytic control form for quantum systems that are characterised by Gaussian pdfs and whose time evolutions are represented by  a bilinear state space form obtained from  the Liouville-von Neumann equation. Thirdly, in contrast to previously discussed methods in the literature \cite{RBZ98,RBZ98_2} that require a positive definite target operator  to guarantee convergence of the objective function, our proposed method is not restricted by that requirement.

This paper is organised as follows: Section (\ref{Evolution of open quantum systems}) briefly recalls the evolution of open quantum systems and develops the corresponding state space model. In Section (\ref{Fully Probabilistic Control for Quantum Systems}), we introduce a general theory to fully control the quantum systems using a probabilistic approach and demonstrate its general solution. The developed approach will then be applied in Section (\ref{QuantumCA}) to systems described by Gaussian pdfs . Thereafter, the method is applied to a molecular system and spin systems in Section (\ref{Results and discussions}). Finally, some conclusions are provided in Section (\ref{conclu}).
\section{Evolution of open quantum systems}\label{Evolution of open quantum systems}
\subsection{A brief review of Liouville-von Neumann equation}
The time evolution of a  quantum open system can be described by the following Liouville-von Neumann equation,
	\begin{equation}\label{LVN_eq}
	i \hbar\dfrac{d\rho(t)}{d t} =[H_0+H_u(t),\rho(t)]+\mathcal{L}(\rho(t)), \hspace*{0,2cm} \rho(0)=\rho_0,
\end{equation} 
where $\hbar$ is the reduced Planck's constant, $\rho(t)$ is the reduced density operator which is a positive semidefinite Hermitian operator whose trace, $\Tr({\rho(t)})=1$, ${H}_{0}$ is the free Hamiltonian, and  $H_u(t)=-\mu u(t)$ describes the interaction between the system and the external electric field ${u}(t)$ through the operator $\mu$ related to the system \cite{Neumann,Nemes,redu,wavepack1}. The commutator in equation (\ref{LVN_eq}) describes the system  Liouvillian super-operator for the reduced density operator $\rho(t)$ while $\mathcal{L}(\rho(t))$ represents the open system Liouvillian. In the Linblad approach, it is provided by, 
	\begin{equation}
	\mathcal{L}(\rho(t))=i\hbar\sum_{s}\big(L_s\rho(t) L_s^\dagger-\dfrac{1}{2}\{L_s^\dagger L_s,\rho(t)\}\big),
\end{equation}  
where $L_s$ are the Linblad operators defined in terms of the dissipative
transition rates $\Gamma_{k\to j}$ from  the free Hamiltonian  eigenstate $\ket{k}$ to the eigenstate $\ket{j}$ as,
\begin{equation}
	L_s=L_{j,k}=\sqrt{\Gamma_{k\to j}}\ket{j}\bra{k}.
\end{equation}
In this equation, it is assumed that $k$ takes a finite number of values, i.e., $\{\ket{k}, k = 0,..., l-1\}$  with $l$ being the number of the eigenvectors of the free Hamiltonian \cite{redu,wavepack1}. From equation (\ref{LVN_eq}), it can be easily shown that the time evolution of the elements of the density operator  is given by, 
	\begin{equation}\label{rho_elements}
	\dfrac{d\rho_{n,m}(t)}{d t}=(-i\omega_{n,m}-\gamma_{n,m})\rho_{n,m}(t)+\sum_{k=0}^{l-1}\Gamma_{k\to n}\rho_{k,k}(t)\delta_{n,m}+i\dfrac{{u}(t)}{\hbar}\sum_{k=0}^{l-1}(\mu_{n,k}\rho_{k,m}(t)-\rho_{n,k}(t)\mu_{k,m}),
\end{equation}
where $\{n,m\}=\{0,1,\dots,l-1\}$, $\delta_{n,m}$ is the Kronecker delta function, $\mu_{k,n}:=\bra{k}\mu\ket{n}$ are the elements of the operator $\mu$, $\omega_{n,m}:=\dfrac{E_n-E_m}{\hbar}$ are the Bohr frequencies, with $E_n$ being the energy eigenvalue of the free Hamiltonian $H_0$ associated with the eigenvector $\ket{n}$, and $\gamma_{n,m}$ is the total dephasing rate defined by,
\begin{equation}
	\gamma_{n,m}:=\dfrac{1}{2}\sum_{j=0}^{l-1}(\Gamma_{n\to j}+\Gamma_{m\to j}).
\end{equation}
The time evolution of a physical property, described by an Hermitian operator $\hat{o}$, of the quantum open system whose evolution is described in equation (\ref{LVN_eq}) can be computed by using the following relation, 
\begin{equation}\label{obervable_evolution}
\tilde{o}(t)=\braket{\hat{o}}=\Tr(\rho(t)\hat{o}).
\end{equation}

\subsection{Evolution of the vectorisation of the density operator}\label{sec2_2}
The density matrix $\rho(t)$ described in equation (\ref{LVN_eq}) can be written as follows,
 \begin{equation}
\rho(t)=(\rho(t))^\dagger=\left(
	\begin{array}{ccccccccc}
		\rho_{0,0}(t)&\rho_{0,1}(t) & \dots &\rho_{0,l-1}(t) \\
		\rho_{1,0}(t)&\rho_{1,1}(t) & \dots &\rho_{1,l-1}(t) \\
		\vdots&\vdots & \ddots &\vdots \\
		\rho_{l-1,0}(t)&\rho_{l-1,1}(t) & \dots &\rho_{l-1,l-1}(t) \\
	\end{array}
	\right)\in\mathbb{C}^{l \times l}.\end{equation}
By using the following vectorisation of $\rho(t)$,
\begin{align}\label{x_t_vet}
\tilde{x}(t)&=\text{vec}(\rho(t)) \nonumber \\&=\begin{bmatrix}
\rho_{0,0}(t) & \rho_{1,1}(t) 
\dots
\rho_{l-1,l-1}(t)& \rho_{0,1}(t) \dots  
\rho_{0,l-1}(t)&\rho_{1,0}(t) 
\dots\rho_{l-1,0}(t) 
\dots\dots\dots\rho_{l-1,1}(t)\dots\rho_{l-1,l-1}(t) 
\end{bmatrix}^T,
\end{align}
the differential equations (\ref{rho_elements}) can be equivalently written as,
\begin{equation}\label{NLVN_pa}
	\dfrac{d\tilde{x}(t)}{d t} =(\tilde{A}+iu(t) \tilde{N})\tilde{x}(t),\hspace{0.5cm} \tilde{x}(0)=\tilde{x}_0,
\end{equation} 
where $\tilde{A} \in\mathbb{C}^{l^2 \times l^2}$,  $\tilde{N} \in\mathbb{C}^{l^2\times l^2}$ are some matrices whose elements can be found from equation (\ref{rho_elements}), and $\tilde{x}_0$ is the vectorisation of the initial density operator \cite{redu}. In addition $T$ in equation (\ref{x_t_vet}) stands for the transpose operation. An example of how the $\tilde{A}$ and $\tilde{N}$ matrices can be constructed  for a 2-dimensional physical system is given in Appendix (\ref{vect_app}). The demonstration in the appendix can be generalised to $l$-dimensional physical system for $l>2$.

By using the following shifting, 
\begin{equation}
x(t)\to\tilde{x}(t)-x_e,
\end{equation}
the differential equation (\ref{NLVN_pa}) can  be re-written as follows, 
\begin{equation}\label{state_space}
	\dfrac{dx({t})}{d t}=\tilde{A} x({t})+\tilde{B}(x(t)) u(t) ,\hspace{0.5cm} x(0)=\tilde{x}_0-x_e, 
\end{equation}
where $x_e$ is an eigenvector of $\tilde{A}$ with $\tilde{A}x_e=0$,  and $\tilde{B}(x(t)) =i\tilde{N}(x(t)+x_e)$. This equation
 describes the dynamics of the system $x(t)$ depending
on the  input electric field ${u(t)}$ \cite{redu}.\\
Discretising the state space equation (\ref{state_space}), and using  $x_t\equiv{x}(t)$ and $u_t\equiv{u}(t)$, the discrete time state space representation can be written as follows,
\begin{align}\label{StateQua0}
	x_{t+1} = A x_t + B(x_t) u_t,
\end{align}
where,
\begin{align}\label{A_B_t}
	A &= e^{\tilde{A} \Delta t}, \nonumber\\
	B(x_{t})&= \bigg( \int_0^{\Delta t} e^{\tilde{A} \lambda} \tilde{B}(x_\lambda) \mathrm{d} \lambda\bigg), 
\end{align}
and where $\lambda = \Delta t - t$ with $\Delta t$ being the sampling period. Equivalently, we can re-write the relation (\ref{StateQua0}) as,
\begin{align}\label{StateQua}
	x_{t} = A x_{t-1} + B(x_{t-1}) u_{t-1}.
\end{align}
As discussed in the introduction section and according to relevant literature in the field \cite{Doherty},  various sources of uncertainties might affect the evolution of  quantum systems including for example sampling, parameters, and functional uncertainties. Hence, a noise term that accounts for the various sources of uncertainty is added to the discretised equation (\ref{StateQua}) to give,
\begin{equation}\label{Inpu2NoiseDelay}
	{x}_{t}=Ax_{t-1}+B(x_{t-1}) u_{t-1} + \zeta_{t},
\end{equation}
where $\zeta_{t}$ is a multivariate Gaussian noise. 
According to the discretised state vector of the vectorised density operator defined in equation (\ref{Inpu2NoiseDelay}), the measurement equation can be defined as follows,
\begin{equation}\label{outp}
	o_t=Dx_t+\sigma_t,
\end{equation}
where  $D=(\text{vec}(\hat{o}^T))^T $, and $\sigma_t$ is a multivariate Gaussian noise which affects the observed values, $o_t$.  This equation is called the output equation. Equations (\ref{Inpu2NoiseDelay}) and (\ref{outp}) are referred to as the bilinear state space model that characterise quantum systems.
\section{Fully Probabilistic Control for Quantum Systems}\label{Fully Probabilistic Control for Quantum Systems}
In this  section we will state the control objectives and provide
the general solution of the probabilistic control problem for  quantum systems  described by arbitrary probabilistic dynamical models.
\subsection{Objectives of the Fully Probabilistic Control for Quantum Systems}
Due to the presence of  the noise signal, $\zeta_{t}$ in equation (\ref{Inpu2NoiseDelay}), the system state at time $t$ can not be completely specified using the previous state and control input. It can only be completely determined from its probabilistic description,  $s(x_{t} | u_{t-1}, x_{t-1})$.  Similarly, the measurement output can be only determined by a suitable pdf,
 $s(o_{t} | x_{t})$. Following this probabilistic description, the control problem will be stated as follows: Find the pdf of the randomised controller $c(u_{t-1}|x_{t-1})$ that  minimises the Kullback-Leibler divergence (KLD) between the joint pdf of the closed-loop description of the system dynamics, $f(\mathcal{Z}(t, \mathcal{H}))$ and a predefined ideal one, $^If(\mathcal{Z}(t, \mathcal{H}))$,
 \begin{equation}\label{KLD}
 	\mathcal{D}(f||^I f)=\int f(\mathcal{Z}(t, \mathcal{H}))\ln\big(\dfrac{f(\mathcal{Z}(t, \mathcal{H}))}{^If(\mathcal{Z}(t, \mathcal{H}))}\big)d\mathcal{Z}(t, \mathcal{H}),
 \end{equation}
 where $\mathcal{Z}(t, \mathcal{H}) = \{x_t, \dots, x_\mathcal{H}, o_t, \dots, o_\mathcal{H}, u_{t-1}, \dots, u_\mathcal{H}\}$ is the system closed-loop data sequence, and $\mathcal{H} \le \infty$ is a given control horizon. The joint probability function  of the closed-loop description of the system dynamics    can be evaluated as follows,
 \begin{equation}\label{JointDist}
 	f(\mathcal{Z}(t, \mathcal{H})) = \prod_{t=1}^\mathcal{H} s(x_t|x_{t-1},u_{t-1})s(o_t|x_{t})c(u_{t-1}|x_{t-1}).
 \end{equation}
 Similarly, the ideal joint pdf of the closed-loop data can be factorised as follows,
 \begin{equation}\label{IdealJointDist}
 	^If(\mathcal{Z}(t, \mathcal{H}))=\prod_{t=1}^\mathcal{H} {s(x_t|x_{t-1},u_{t-1})} {^Is(o_t|x_{t})}{^Ic(u_{t-1}|x_{t-1})},
 \end{equation}
 where $^Is(o_t|x_{t})$ represents the ideal distribution of the measurement $o_t$, and $^Ic(u_{t-1}|x_{t-1})$ represents the ideal pdf of the controller. Also, note that in equation (\ref{IdealJointDist}), the factor, $s(x_t|x_{t-1},u_{t-1})$ describing the ideal distribution of the state vector of the vectorised density operator, is taken to be equal to the corresponding counterpart in equation (\ref{JointDist}). This indicates that the evolution of the density matrix will evolve according to the Liouville von-Neumann equation.
 
Following \cite{Karny_1,RH_2011,RH_2013}, the derivation of the
 optimal randomised controller that minimises the KLD given in equation (\ref{KLD}) can be obtained recursively by defining  the expected minimum cost-to-go function $-\ln(\gamma(x_{t-1}))$ governed by, 
 \begin{align}\label{OptPer}
 	-\ln(\gamma(x_{t-1})) &=  \underset{c(u_{t-1}|x_{t-1})}{\text{min}}\sum_{\tau=t}^\mathcal{H} \int f(\mathcal{Z}_t, \dots, \mathcal{Z}_\mathcal{H}|x_{t-1}) \ln \left( \dfrac{s(o_\tau|x_\tau)c(u_{\tau-1}|x_{\tau-1})}{^Is(o_\tau|x_\tau)^Ic(u_{\tau-1}|x_{\tau-1})} \right)  d(\mathcal{Z}_t, \dots, \mathcal{Z}_\mathcal{H}),
 \end{align} 
 for arbitrary $\tau \in \{1, . . . , \mathcal{H}\}$.
 \begin{theorem}\label{theo1}	
 Using the definition of the cost-to-go function given in equation (\ref{OptPer}), the minimisation
 of Kullback-Leibler divergence (\ref{KLD}) between the joint pdf (\ref{JointDist}) and the ideal one (\ref{IdealJointDist}) can be performed recursively to
 give the following recurrence functional equation,
 	\begin{align}
 		\label{cost_to_go_2}
 	-\ln(\gamma(x_{t-1})) &=  \underset{c(u_{t-1}|x_{t-1})}{\text{min}} \int\bigg[ s(x_t|x_{t-1}, u_{t-1})s(o_t|x_t)c(u_{t-1}|x_{t-1})\bigg(\ln \left( \dfrac{s(o_t|x_t)c(u_{t-1}|x_{t-1})}{^Is(o_t|x_t)^Ic(u_{t-1}|x_{t-1})} \right) \nonumber \\ &-\ln(\gamma(x_{t}))\bigg)  \bigg]d(x_t,o_t,u_t),
 	\end{align}
 \end{theorem}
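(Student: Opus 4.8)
The plan is to prove the recursion by the dynamic-programming (backward-induction) argument that underlies fully probabilistic design. First I would substitute the factorisations \eqref{JointDist} and \eqref{IdealJointDist} into the integrand of the divergence \eqref{KLD}. Since the text fixes the ideal state-transition factor to equal the actual one, every factor $s(x_t\mid x_{t-1},u_{t-1})$ cancels in the log-ratio, so that $\ln(f/{}^If)$ collapses to the sum $\sum_{\tau=1}^{\mathcal{H}}\ln\big(s(o_\tau\mid x_\tau)c(u_{\tau-1}\mid x_{\tau-1})/[{}^Is(o_\tau\mid x_\tau)\,{}^Ic(u_{\tau-1}\mid x_{\tau-1})]\big)$, which is exactly the per-step term appearing in the cost-to-go \eqref{OptPer}. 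This identifies the full KLD with the cost-to-go at the initial time and reduces the problem to verifying that \eqref{OptPer} satisfies the claimed one-step recursion.

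Next I would isolate the current step by splitting $\sum_{\tau=t}^{\mathcal{H}}$ into the term $\tau=t$ and the tail $\sum_{\tau=t+1}^{\mathcal{H}}$, and exploit the Markov factorisation of the conditional joint pdf inherited from \eqref{JointDist},
\[
f(\mathcal{Z}_t,\dots,\mathcal{Z}_\mathcal{H}\mid x_{t-1})=\big[s(x_t\mid x_{t-1},u_{t-1})\,s(o_t\mid x_t)\,c(u_{t-1}\mid x_{t-1})\big]\,f(\mathcal{Z}_{t+1},\dots,\mathcal{Z}_\mathcal{H}\mid x_t).
\]
For the $\tau=t$ term the integrand depends only on the current variables $(x_t,o_t,u_{t-1})$, so integrating out all future data (which integrates to unity) reduces it to the first bracketed term of \eqref{cost_to_go_2}. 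For the tail, the same factorisation lets me pull the current-step weight $s(x_t\mid x_{t-1},u_{t-1})s(o_t\mid x_t)c(u_{t-1}\mid x_{t-1})$ outside an inner integral over $(\mathcal{Z}_{t+1},\dots,\mathcal{Z}_\mathcal{H})$ that, conditioned on $x_t$, is precisely the cost-to-go sum evaluated at time $t+1$.

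The decisive step is the interchange of minimisation and integration that realises the Bellman principle. I would argue that the future control laws $c(u_\tau\mid x_\tau)$ with $\tau\ge t$ enter only the tail integral and never the current term, while the present law $c(u_{t-1}\mid x_{t-1})$ affects the future solely through the transition into $x_t$; consequently the nested minimisation decouples. Minimising the future laws first turns the inner tail integral into $-\ln(\gamma(x_t))$ by the very definition \eqref{OptPer} one step ahead, after which only the outer minimisation over $c(u_{t-1}\mid x_{t-1})$ remains. Collecting the current term together with this minimised tail, both weighted by the current-step pdf, produces the bracket $\ln(\cdots)-\ln(\gamma(x_t))$ of \eqref{cost_to_go_2} and completes the recursion. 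I expect this decoupling to be the point demanding the most care, since it rests jointly on the Markov structure and on the cancellation of the transition factor, which together guarantee that the current cost is independent of the future policy and that the future cost depends on the present only through $x_t$.
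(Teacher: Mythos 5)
Your proposal is correct and follows essentially the same route the paper intends: its one-line proof simply says to evaluate the cost-to-go function \eqref{OptPer} recursively, and your argument---cancellation of the common transition factor $s(x_t|x_{t-1},u_{t-1})$ in the log-ratio, splitting the sum at $\tau=t$, pulling the current-step weight out of the tail via the Markov factorisation, and invoking the Bellman decoupling so the minimised tail becomes $-\ln(\gamma(x_t))$---is exactly the standard fully-probabilistic-design derivation (cf.\ the cited K\'arn\'y references) that this pointer compresses. You have merely spelled out the details the paper omits, including the correct reading of the minimisation in \eqref{OptPer} as being over the present and all future randomised control laws.
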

 \begin{proof}
 	The derivation of the above result can be obtained by evaluating the optimal cost-to-go function specified in equation (\ref{OptPer}).
 \end{proof}
\subsection{ General solution to  the Fully Probabilistic quntum Control Problem}\label{SFPDCQS}
The general solution of the  control
problem  obtained from the minimisation of the cost-to-go
function with respect to control distribution, $c(u_{t-1}|x_{t-1})$, can be obtained as shown in the following theorem.
\begin{theorem}\label{Prop1}	
	The pdf of the optimal control law, $c(u_{t-1}|x_{t-1})$, that minimises the cost-to-go function (\ref{cost_to_go_2}) can be shown to be given by, 
	\begin{equation}
		\label{eq:Eqn4}
		c(u_{t-1}|x_{t-1})=\frac{^Ic(u_{t-1}|x_{t-1}) \exp [-\beta(u_{t-1},x_{t-1})]}{\gamma(x_{t-1})},
	\end{equation}
	where,
	\begin{equation}\label{gamma2}
		\gamma(x_{t-1}) = \int {^Ic(u_{t-1}|x_{t-1}) \exp[{-\beta(u_{t-1},x_{t-1})}] \mathrm{d} u_{t-1}},
	\end{equation}
	\begin{equation}\label{beta}
		\beta(u_{t-1},x_{t-1}) = \int s(x_{t}|u_{t-1}, x_{t-1}) s(o_{t}|x_{t})\times \ln \bigg ( \frac{s(o_{t}|x_{t}) }{ ^I s(o_{t}|x_{t}) } \frac{1}{{\gamma}(x_{t})}\bigg) \mathrm{d} x_{t}\mathrm{d} o_{t}.
	\end{equation}
\end{theorem}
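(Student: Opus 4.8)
The plan is to reduce the minimisation in \eqref{cost_to_go_2} to the minimisation of a single Kullback--Leibler divergence in the unknown controller $c(u_{t-1}|x_{t-1})$, and then to invoke the non-negativity of the KLD (Gibbs' inequality) to read off simultaneously the optimal $c$ and the value $-\ln\gamma(x_{t-1})$. Since the outer minimisation is over a conditional pdf, I would carry it out pointwise in $x_{t-1}$, treating $c(\cdot|x_{t-1})$ as a probability density in $u_{t-1}$ subject only to the normalisation constraint $\int c(u_{t-1}|x_{t-1})\,du_{t-1}=1$.

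First I would split the logarithm appearing in \eqref{cost_to_go_2} into an observation part and a controller part,
\[
\ln\!\left(\frac{s(o_t|x_t)c(u_{t-1}|x_{t-1})}{{}^Is(o_t|x_t){}^Ic(u_{t-1}|x_{t-1})}\right)=\ln\!\frac{s(o_t|x_t)}{{}^Is(o_t|x_t)}+\ln\!\frac{c(u_{t-1}|x_{t-1})}{{}^Ic(u_{t-1}|x_{t-1})}.
\]
The controller part depends on neither $x_t$ nor $o_t$, so integrating it against $s(x_t|x_{t-1},u_{t-1})s(o_t|x_t)$ and using $\int s(o_t|x_t)\,do_t=1$ together with $\int s(x_t|x_{t-1},u_{t-1})\,dx_t=1$ leaves simply $\int c\,\ln(c/{}^Ic)\,du_{t-1}$. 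The remaining observation part together with the $-\ln\gamma(x_t)$ term factors as $\int c(u_{t-1}|x_{t-1})\,\beta(u_{t-1},x_{t-1})\,du_{t-1}$, where the inner integral over $(x_t,o_t)$ is exactly the definition \eqref{beta} of $\beta$ once one writes $\ln(s/{}^Is)-\ln\gamma(x_t)=\ln\!\big((s/{}^Is)/\gamma(x_t)\big)$.

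At this point the objective has collapsed to $\int c\,[\ln(c/{}^Ic)+\beta]\,du_{t-1}$. Introducing the normalising constant $\gamma(x_{t-1})=\int {}^Ic\,e^{-\beta}\,du_{t-1}$ of \eqref{gamma2}, I would rewrite this as
\[
\int c(u_{t-1}|x_{t-1})\,\ln\!\left(\frac{c(u_{t-1}|x_{t-1})}{{}^Ic(u_{t-1}|x_{t-1})\,e^{-\beta(u_{t-1},x_{t-1})}/\gamma(x_{t-1})}\right)du_{t-1}-\ln\gamma(x_{t-1}),
\]
where the $-\ln\gamma(x_{t-1})$ term is pulled out using $\int c\,du_{t-1}=1$. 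The first term is the KLD between $c$ and the normalised density ${}^Ic\,e^{-\beta}/\gamma$; by Gibbs' inequality it is non-negative and vanishes precisely when $c$ equals that density, which is exactly \eqref{eq:Eqn4}. The minimum value is then $-\ln\gamma(x_{t-1})$, confirming consistency with the left-hand side of \eqref{cost_to_go_2}.

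The main obstacle is not analytic but organisational: keeping careful track of which variables each factor depends on, so that the $(x_t,o_t)$-integrations collapse cleanly and the leftover integral matches the definition of $\beta$. A secondary point to verify is that the normalisation constraint $\int c\,du_{t-1}=1$ is precisely what permits the factor $\ln\gamma(x_{t-1})$ to be extracted as an additive constant, leaving a residual functional that is genuinely a KLD; this is what makes the Gibbs-inequality argument applicable and identifies the minimiser uniquely almost everywhere.
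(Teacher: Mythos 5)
Your proof is correct and follows essentially the same route as the paper, whose own ``proof'' simply defers to Proposition 2 of K\'arn\'y and Guy \cite{Karny_2}: that reference establishes the result by precisely your argument --- collapsing the $(x_t,o_t)$-integrations into $\beta(u_{t-1},x_{t-1})$, completing the objective into a Kullback--Leibler divergence between $c$ and the normalised tilted density ${}^Ic\,e^{-\beta}/\gamma$, and invoking Gibbs' inequality to identify the minimiser and the minimum value $-\ln\gamma(x_{t-1})$. In effect you have supplied the details the paper leaves to the citation, with no gaps.
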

\begin{proof}
 This theorem can be easily proven by adapting the proof of
Proposition 2 in \cite{Karny_2}.
\end{proof}
\begin{remark}
The solution of the control problem stated in theorem (\ref{Prop1}) is not restricted by the distributions characterising the evolution of the quantum system dynamics, their controllers, or their corresponding ideal distributions. It provide a general solution for any arbitrary pdfs. However, as will be seen in the following section, if all of the generative probabilistic models of the system dynamics, controller and ideal outcomes are Gaussian pdfs,  an analytic form for the randomised controller can be obtained.  
\end{remark}
\section{Solution of Quantum Control Problems with Gaussian pdfs }\label{QuantumCA}
In this section, we apply the theory introduced in Section (\ref{SFPDCQS}) for general quantum systems to quantum systems that can be described by Gaussian pdfs. By assuming that  the noise $\zeta_t$, appearing in equation (\ref{Inpu2NoiseDelay}), that is accounting for the different sources of uncertainties is Gaussian noise, the pdf of the system state, $x_t$ can be characterised by a complex normal pdf, 
\begin{align}
	\label{eq:eq4}
	s(x_{t}\left| {x_{t - 1}},u_{t-1} \right.) &\sim \mathcal{N}_{\mathcal{C}}(\mu_{t},{\Gamma}^{'},C^{'}), 
\end{align}
where,
\begin{align}
	\label{eq:relation1}
	\mu_t& =E(x_t)=Ax_{t-1}+Bu_{t-1},\nonumber\\ 
	\Gamma^{'}&= E((x_t-\mu_t)(x_t-\mu_t)^\dagger),\nonumber\\ 
	C^{'}&= E((x_t-\mu_t)(x_t-\mu_t)^T),
\end{align}
and where $A$ and $B$ are the state and control matrices respectively as stated in the previous sections, $E(.)$ stands for the expected value, $T$ is the  transpose operator and $\dagger$ is the conjugate transpose operator. The matrices $\mu_t$, $\Gamma^{'}$ and $C^{'}$ are called respectively the mean, the covariance and the relation or pseudo-covariance matrices. For completeness, the form of the complex normal distribution is recalled in Appendix (\ref{AppA_0}).

 Similarly, by assuming that the noise term $\sigma_t$ appearing in equation (\ref{outp}) is generated by a Gaussian distribution, the pdf of the measurement $o_t$ can be characterised by a Gaussian pdf as follows,
\begin{align}
	\label{eq:eq5}
	s({o_t}\left| {{x_{t}}} \right.) &\sim \mathcal{N}(Dx_{t},G),
\end{align}
with mean equal to $Dx_t$ and covariance matrix $G$ defined as follows,
 \begin{equation}
G=E((o_t-Dx_t)(o_t-Dx_t)^T).
\end{equation}
It is worth stating that the variable $o_t$ is real and for this reason, its pdf  is assumed to be a real Gaussian distribution. The whole system, the quantum system state, measurement and the electric field, i.e., the controller $u_t$, can be totally described by the following joint pdf,
\begin{equation}\label{gpdf2}
	f(x_t,o_t,u_{t-1}|x_{t-1})=c(u_{t-1}|x_{t-1})\mathcal{N}(Dx_{t},G)\mathcal{N}_{\mathcal{C}}(\mu_{t},{\Gamma}^{'},C^{'}),
\end{equation}
where $c(u_{t-1}|x_{t-1})$ is the distribution of the electric field as stated in the previous sections.\\
The desired distributions of the system state, measurement, and controller are then assumed to be given by the following ideal joint pdf, 
\begin{equation}\label{igpdf2}
	^If(x_t,o_t,u_{t-1}|x_{t-1})=	{}^Is(x_{t} | x_{t-1},{u_{t-1}}){}^Is({o_t}\left| {{x_{t}}} \right.)	{}^Ic({u_{t-1}}\left| {x_{t - 1}} \right),
\end{equation}
where,
\begin{align}
\label{eq:eq1st}
		{}^Is(x_{t} | x_{t-1},{u_{t-1}})& =s(x_{t} | x_{t-1},{u_{t-1}})\sim  \mathcal{N}_{\mathcal{C}}(\mu_{t},{\Gamma}^{'},C^{'}),\\
	\label{eq:eq2}
	{}^Is({o_t}\left| {{x_{t}}} \right.) &\sim \mathcal{N}({o}_{d},G_r),\\
	\label{eq:eq3}
	{}^Ic({u_{t-1}}\left| {x_{t - 1}} \right.) &\sim \mathcal{N}(u_r,\Omega),
\end{align}
and where $o_d$ and $G_r$ are respectively the mean and the covariance of the ideal pdf of the measurement, and $u_r$ and $\Omega$  stand for the mean and the covariance of the ideal pdf of the controller, respectively. Also, as can be seen from equation (\ref{eq:eq1st}), the ideal distribution of the quantum system state is taken to be equal to the actual distribution of the state as can be characterised from the Liouville-von Neumann equation. 
 
The control objective can then be stated as follows: Design a randomised controller, $c(u_{t-1}|x_{t-1})$ that will make the joint distribution of the quantum system as specified by equation ($\ref{gpdf2}$) as close as possible to the desired joint distribution as specified by equation (\ref{igpdf2}). The form of this distribution will be shortly shown, but we first provide the form of the performance index leading to the following theorem.

\begin{theorem}\label{Theo1}
	By substituting the ideal distribution of the measurements (\ref{eq:eq2}), the ideal distribution of the controller (\ref{eq:eq3}), the real distribution of  measurement (\ref{eq:eq5}), and the real distribution of  system dynamics (\ref{eq:eq4}) into (\ref{gamma2}), the performance index can be shown to be given by,
	\begin{align}
		- \ln \left( {\gamma \left( {{x_{t-1}}} \right)} \right) &= 0.5{x}_{t-1}^{T}{M_{t-1}}{x_{t-1}}+0.5P_{t-1}{{x}_{t-1}} + {0.5\omega_{t-1}}, \label{49}
	\end{align}
	where, 
	\begin{align}
		\label{Mt}
		{M_{t - 1}} &=A^{T} (D^{T} G_r^{-1}D+M_{t})A-A^{T} (D^{T} G_r^{-1}D+M_{t})^{T} B
		\big(\Omega^{-1}+B^{T} ({D^{T} }G_r^{ - 1}D+M_{t})B\big)^{-1}
		\nonumber\\
		&	B^{T} (D^{T} G_r^{-1}D+M_{t})A,\end{align}

	\begin{align}
		\label{Pt}
		P_{t-1}&=(P_{t}-2{o}_{d}^{T} G_r^{-1}D)A+2(\Omega^{-1}u_r-0.5B^{T} (P_{t}^{T} -2D^{T} G_r^{-1}{o}_{d}))^{T} 
		\big(\Omega^{-1}+B^{T} ({D^{T} }G_r^{ - 1}D+M_{t})B\big)^{-1}\nonumber\\
		&B^{T} (D^{T} G_r^{-1}D+M_{t})A,
	\end{align}
	and 
	\begin{align}
		\label{Const}
		\omega_{t-1}&={\omega_{t}}+ {o}_{d}^{T} G_r^{ - 1}{{o}_{d}}   +\ln(\dfrac{|G_r|}{|G|})- \Tr\big(G(G^{-1}-G_r^{ - 1})\big)-\Tr(\bar{C}^{-1}({D^{T} }G_r^{ - 1}D+M_{t}))
		+u_r^{T} \Omega^{-1}u_r	\nonumber\\
		&-\big( \Omega^{-1}u_r-0.5B^{T} (P_{t}^{T} -2{D}^{T}G_r^{ - 1}{{o}_{d}})\big)^{T} 
		\big(\Omega^{-1}+B^{T} ({D^{T} }G_r^{ - 1}D+M_{t})B\big)^{-1}	
		\big( \Omega^{-1}u_r-0.5B^{T} (P_{t}^{T} -2{D}^{T}G_r^{ - 1}{{o}_{d}})\big)\nonumber\\
		&-2\ln({|\Omega|^{-1/2}}|\Omega^{-1}+B^{T} ({D^{T} }G_r^{ - 1}D+M_{t})B|^{-1/2}),
	\end{align}	
such that $C=\dfrac{C^{'}}{|\Gamma^{'}|^2-|C^{'}|^2}$, and $|G|$ stands for the determinant of the matrix $G$. 
\end{theorem}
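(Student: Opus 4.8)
The plan is to prove the stated quadratic form by backward induction on the control horizon, exploiting the fact that Gaussian integrals of quadratic exponents reproduce quadratic forms. The inductive hypothesis is that the next-step cost-to-go already has the asserted structure, $-\ln(\gamma(x_t)) = 0.5\,x_t^{T} M_t x_t + 0.5\,P_t x_t + 0.5\,\omega_t$; I would then show that $-\ln(\gamma(x_{t-1}))$, built from \eqref{gamma2} and \eqref{beta} of Theorem \ref{Prop1}, inherits the same form and read off the recurrences \eqref{Mt}, \eqref{Pt}, and \eqref{Const}. The terminal condition supplied by the horizon $\mathcal{H}$ (with $M_\mathcal{H}=0$, $P_\mathcal{H}=0$, $\omega_\mathcal{H}=0$) furnishes the base case, so only the inductive step requires genuine computation.

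First I would evaluate $\beta(u_{t-1},x_{t-1})$ in \eqref{beta}. Substituting the real measurement density \eqref{eq:eq5} and its ideal counterpart \eqref{eq:eq2}, the log-ratio $\ln\big(s(o_t|x_t)/{}^I s(o_t|x_t)\big)$ is quadratic in $(o_t,x_t)$, while the factor $-\ln(\gamma(x_t))$ is quadratic in $x_t$ by the inductive hypothesis. I would carry out the inner expectation over $o_t$ against $\mathcal{N}(Dx_t,G)$, a standard Gaussian moment calculation, which collapses the $o_t$-dependence and produces the quadratic term $(Dx_t-o_d)^{T}G_r^{-1}(Dx_t-o_d)$ together with the trace correction $\Tr\big(G(G^{-1}-G_r^{-1})\big)$ and the log-determinant $\ln(|G_r|/|G|)$ that appear in \eqref{Const}.

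Next I would integrate over $x_t$ against the complex normal state density $\mathcal{N}_{\mathcal{C}}(\mu_t,\Gamma',C')$ of \eqref{eq:eq4}, with mean $\mu_t = Ax_{t-1}+Bu_{t-1}$. Because the remaining integrand is quadratic in $x_t$, the Gaussian integral equals its value at the mean $\mu_t$ plus a covariance-dependent trace correction; this is the step that generates the combination $C = C'/(|\Gamma'|^2-|C'|^2)$ and the term $\Tr\big(\bar{C}^{-1}(D^{T}G_r^{-1}D+M_t)\big)$ in \eqref{Const}. Substituting $\mu_t = Ax_{t-1}+Bu_{t-1}$ then renders $\beta(u_{t-1},x_{t-1})$ an explicit quadratic form in the pair $(x_{t-1},u_{t-1})$, in which the quadratic coefficient of $u_{t-1}$ equals $B^{T}(D^{T}G_r^{-1}D+M_t)B$.

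Finally I would compute $\gamma(x_{t-1})$ from \eqref{gamma2} by integrating ${}^I c(u_{t-1}|x_{t-1})\exp[-\beta]$ over $u_{t-1}$ against the ideal controller $\mathcal{N}(u_r,\Omega)$ of \eqref{eq:eq3}. Combining the ideal Gaussian with $\exp[-\beta]$ and completing the square in $u_{t-1}$ yields the normalisation factor $|\Omega^{-1}+B^{T}(D^{T}G_r^{-1}D+M_t)B|^{-1/2}$, the matrix inverse $\big(\Omega^{-1}+B^{T}(D^{T}G_r^{-1}D+M_t)B\big)^{-1}$, and a residual quadratic in $x_{t-1}$. Matching the coefficients of the quadratic, linear, and constant parts of $-\ln(\gamma(x_{t-1}))$ reproduces exactly \eqref{Mt}, \eqref{Pt}, and \eqref{Const}, closing the induction. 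The main obstacle I anticipate is the bookkeeping for the complex normal integration over $x_t$: I must verify that real-valued quadratic forms and the correct trace term emerge from the complex covariance structure $(\Gamma',C')$, and that the normalisation constants from the two nested Gaussian integrations are tracked consistently into $\omega_{t-1}$ without sign or determinant errors.
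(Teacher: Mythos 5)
Your proposal is correct and follows essentially the same route as the paper's proof in Appendix (\ref{AppA}): the paper likewise assumes the quadratic form of $-\ln(\gamma(x_t))$ at the next time step (the backward recursion you phrase as induction), integrates the log-ratio over $o_t$ to obtain the $G_r^{-1}$ quadratic plus the $\Tr\big(G(G^{-1}-G_r^{-1})\big)$ and $\ln(|G_r|/|G|)$ terms, integrates over $x_t$ against the complex normal to pick up the $\Tr\big(\bar{C}^{-1}(D^{T}G_r^{-1}D+M_t)\big)$ correction with $C=C'/(|\Gamma'|^2-|C'|^2)$, substitutes $\mu_t=Ax_{t-1}+Bu_{t-1}$, and completes the square in $u_{t-1}$ via a standard multivariate Gaussian integral before matching quadratic, linear, and constant coefficients. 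The only cosmetic difference is that you make the terminal condition $M_{\mathcal{H}}=0$, $P_{\mathcal{H}}=0$, $\omega_{\mathcal{H}}=0$ explicit, which the paper leaves implicit.
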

\begin{proof}
	The proof of this theorem is given in appendix (\ref{AppA}).
\end{proof}
\begin{remark}
For the performance index (\ref{49}) to be real, the $M_{t - 1}$ matrix needs to be evaluated as the product of the transpose of a vectorised Hermitian operator with the vectorised Hermitian operator itself. To elaborate, consider a vectorised Hermitian operator, $W_{t-1}$, then $M_{t - 1}=W^T_{t-1}W_{t-1}$ . This means that the first term in equation (\ref{49}) will be real. Similarly, $P_{t-1}$ in equation (\ref{49}) is a vectorised Hermitian operator.
\end{remark}
Having specified the form of the performance index (\ref{49}), we can easily find the distribution of the optimal controller, $c(u_{t-1}|x_{t-1})$ from definition (\ref{eq:Eqn4}). This is given in the following theorem.
\begin{theorem}\label{Theo2}
	The distribution of the controller  that  minimises the recurrence equation (\ref{cost_to_go_2}) is given by, 
	\begin{align}\label{Probcapp}
c({u_{t-1}}\left| {x_{t - 1}} \right.) &\sim \mathcal{N}(v_{t-1},R_t),
	\end{align}
where, 
\begin{align}\label{optimalcapp}
	v_{t-1}&=\bigg({\Omega ^{ - 1}} + {B^T}( {D^T}G_r^{ - 1}D + {M_t})B\bigg)^{-1} \bigg({\Omega ^{ - 1}}{u_r} - {B^T}({D^T}G_r^{ - 1}D + {M_t})A{x_{t - 1}} -0.5 {B^T}( P_{t }^T - 2{D^T}G_r^{ - 1}{{o}_{d}} )\bigg),
\end{align}
and,
\begin{align}\label{optimal_variance}
	R_t&=\bigg({\Omega ^{ - 1}} + {B^{T} }( {D^{T} }G_r^{ - 1}D + {M_t})B\bigg)^{-1}.
\end{align}
\end{theorem}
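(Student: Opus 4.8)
The plan is to start from the closed-form optimal controller established in Theorem \ref{Prop1}, namely
\begin{equation*}
	c(u_{t-1}|x_{t-1}) = \frac{{}^Ic(u_{t-1}|x_{t-1}) \exp[-\beta(u_{t-1},x_{t-1})]}{\gamma(x_{t-1})},
\end{equation*}
and to show that under the Gaussian assumptions (\ref{eq:eq4}), (\ref{eq:eq5}) and (\ref{eq:eq3}) this expression is itself a Gaussian density in $u_{t-1}$. Since the ideal controller ${}^Ic(u_{t-1}|x_{t-1})\sim\mathcal{N}(u_r,\Omega)$ contributes the quadratic exponent $-\tfrac{1}{2}(u_{t-1}-u_r)^T\Omega^{-1}(u_{t-1}-u_r)$, and since $\gamma(x_{t-1})$ does not depend on $u_{t-1}$ and only normalises the density, the whole $u_{t-1}$-dependence is governed by $\beta(u_{t-1},x_{t-1})$ together with this ideal term. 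The key observation is that $\beta$ turns out to be a quadratic function of $u_{t-1}$, so the product of a Gaussian with the exponential of a quadratic is again Gaussian.

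First I would evaluate $\beta(u_{t-1},x_{t-1})$ from its definition (\ref{beta}). Performing the inner $o_t$-integration of the log-ratio $\ln(s(o_t|x_t)/{}^Is(o_t|x_t))$ against $s(o_t|x_t)\sim\mathcal{N}(Dx_t,G)$ produces the usual Gaussian cross-entropy contributions and leaves a quadratic in $x_t$ of the form $\tfrac{1}{2}(Dx_t-o_d)^T G_r^{-1}(Dx_t-o_d)$ plus constants. Substituting the quadratic cost-to-go $-\ln(\gamma(x_t))=0.5\,x_t^T M_t x_t + 0.5\,P_t x_t + 0.5\,\omega_t$ from Theorem \ref{Theo1} and then integrating over $x_t$ with mean $\mu_t = A x_{t-1} + B u_{t-1}$ collapses everything to a quadratic in $\mu_t$. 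Because $\mu_t$ is affine in $u_{t-1}$, this exhibits $\beta$ as a quadratic form in $u_{t-1}$ whose Hessian is exactly $B^T(D^T G_r^{-1}D + M_t)B$.

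Next I would add the ideal-controller exponent and complete the square in $u_{t-1}$. The total precision is the sum of the two Hessians, $\Omega^{-1}+B^T(D^T G_r^{-1}D + M_t)B$, which immediately gives the covariance $R_t$ of (\ref{optimal_variance}). Collecting the linear terms in $u_{t-1}$ — the contribution $\Omega^{-1}u_r$ from the ideal controller, the cross term $-B^T(D^T G_r^{-1}D + M_t)A x_{t-1}$ carrying the $x_{t-1}$-dependence, and the term $-0.5\,B^T(P_t^T - 2D^T G_r^{-1}o_d)$ coming from the linear parts of the cost-to-go and of the measurement ideal — and multiplying by $R_t$ reproduces the stated mean $v_{t-1}$ of (\ref{optimalcapp}). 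The remaining constant is absorbed into $\gamma(x_{t-1})$, confirming that $c(u_{t-1}|x_{t-1})$ is a properly normalised $\mathcal{N}(v_{t-1},R_t)$.

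The main obstacle is the bookkeeping in the $x_t$-integration: one must propagate both the quadratic and the linear parts of $-\ln(\gamma(x_t))$ through the decomposition $x_t = \mu_t + (x_t-\mu_t)$, ensuring that the fluctuation terms feed only the constant $\omega_{t-1}$ (through the trace terms already recorded in (\ref{Const})) while the mean terms reassemble into the quadratic-in-$u_{t-1}$ form. Keeping the cross terms between $M_t$, $P_t$, $D^T G_r^{-1}D$ and $o_d$ aligned with the signs appearing in (\ref{optimalcapp}) is the delicate part; once $\beta$ is confirmed to be quadratic with the claimed Hessian and linear coefficient, the completion of the square is routine and the identification of $v_{t-1}$ and $R_t$ follows directly.
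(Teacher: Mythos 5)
Your proposal is correct and follows essentially the same route as the paper: Appendix~\ref{appC} likewise substitutes the Gaussian ideal controller (\ref{eq:eq3}) and the quadratic-in-$u_{t-1}$ form of $\beta$ from (\ref{betaapp5}) into the optimal-controller formula (\ref{eq:Eqn4}), then completes the square to read off the precision $\Omega^{-1}+B^{T}(D^{T}G_r^{-1}D+M_t)B$ and mean $v_{t-1}$. The only cosmetic difference is that you re-derive $\beta$ via the $o_t$- and $x_t$-integrations, whereas the paper's proof of Theorem~\ref{Theo2} simply reuses the expression already computed in Appendix~\ref{AppA} for Theorem~\ref{Theo1}; the computation is identical.
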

\begin{proof}
The proof is given in Appendix (\ref{appC}).
\end{proof}
\begin{remark}
The derived probabilistic controller (\ref{Probcapp}) for the assumed Gaussian probability distributions maintains the standard form of linear quadratic controllers. However,  the probabilistic controller (\ref{Probcapp}) is more exploratory due to its probabilistic nature. Ideally control inputs should be sampled from the obtained pdf of the randomised controller. This however results in slightly worse control quality, but randomisation makes the controller more explorative.
\end{remark}
\subsection{Implementation of the Gaussian controller}
The step by step implementation procedure of the Gaussian controller provided in Theorem (\ref{Theo2}) is summarised as pseudo-code in the following algorithm.

\begin{algorithm}[H]
	\caption{Fully probabilistic control of quantum systems\hspace{4cm}}\label{alg:cap} 
	\begin{algorithmic}[1]
		\State  $\text{Evaluate the operator} \hspace{0.1cm} D\hspace{0.1cm} \text{associated with the target operator}\hspace{0.1cm} \hat{o}\text{;}$
		\State  $\text{Compute the matrices} \hspace{0.1cm} \tilde{A}\hspace{0.1cm} \text{and} \hspace{0.1cm} \tilde{N}\hspace{0.1cm} \text{from equation (\ref{rho_elements}), hence evaluate}\hspace{0.1cm} A\hspace{0.1cm} \hspace{0.1cm} \text{from equation (\ref{A_B_t})} \text{;}$
			\State  $\text{Determine the predefined  desired value ${o}_{d}$;}$
		\State  $\text{Specify the initial state $x_0$, the shifting state, $x_e$, and then calculate the initial value of the measurement}$ $\text{ state $o_0\gets Dx_0$;}$
		\State  $\text{Provide the covariance of the ideal distribution of the state vector, $G_r$ and the covariance of the controller, $\Omega$;}$
		\State  $\text{Initialise:}$ $t\gets0$, $M_0 \gets \text{rand}$, $P_0 \gets \text{rand}\text{;}$
		\While{$t \ne \mathcal{H}$}
		\State  $\text{Evaluate} \hspace{0.1cm} B\hspace{0.1cm} \hspace{0.1cm} \text{from equation (\ref{A_B_t})} \text{;}$
		\State $\text{Calculate the steady state solutions of} \hspace{0.1cm} M_t, \text{and}\hspace{0.1cm}  P_t\hspace{0.1cm}\text{following the formulas provided in  equations (\ref{Mt}),}$ $\text{ and (\ref{Pt}) respectively};$
		\State $\text{Use}\hspace{0.05cm} M_t, \text{and}\hspace{0.05cm} P_t\hspace{0.05cm} \text{ to compute the optimal control input, $v_{t-1}$  following  equation (\ref{optimalcapp}) given in Theorem (\ref{Theo2});} $ 
		\State  $\text{Set:}$ $u_{t-1}\gets v_{t-1}$; 
		\State $\text{Using the obtained control input from the previous step, evaluate \hspace{0.1cm}  $x_t $ according to equation (\ref{Inpu2NoiseDelay}),}$ $\text{${x}_{t}\gets Ax_{t-1}+B(x_{t-1}) u_{t-1} + \zeta_{t} $} $
		\State $\text{Following  equation (\ref{outp}), evaluate $o_t$ to find the measurement state at time instant $t$, $o_t \gets Dx_t+\sigma_t$ ;}$
		\State $t\gets t+1\text{;}$	
		\EndWhile
	\end{algorithmic}
\end{algorithm}
Algorithm (\ref{alg:cap}) will be applied in the next section to control a set of quantum physical systems.
\section{Results and discussions}\label{Results and discussions}
The  fully probabilistic control approach introduced in the previous sections will be applied in this section to control certain quantum physical systems and transform their outcome according to prespecified target operators. In particular, we will study a molecular system whose vibrations are described by the Morse potential and spin systems interacting with an external electric field.
\subsection{Molecular system described by the Morse potential}
In this section a  molecular system whose vibrations are described by the Morse potential will be used to demonstrate the proposed Fully probabilistic quantum control method. We start by providing the general theory of Morse potential and then present the simulation result of controlling a 3-level Lithium hybrid molecule. 
\subsubsection{Morse Potential}
The Morse potential model was introduced by P. Morse in 1929 in order to study the motions of the nuclei in  diatomic molecules\cite{Morse}. Since then, the Morse potential has been exploited in many different areas of physics, chemistry,  and biology \cite{Zdravkovic}. Particularly, it was used to describe the vibrations inside diatomic molecules  and in the study of  interactions of molecules with electromagnetic fields\cite{Pauling,Herzberg}. 

The one-dimensional Morse potential describing the vibrations of two oscillating atoms is given by,
\begin{equation}\label{MorseHamilt}
	V_M(r)=D_{0}(e^{-2\alpha (r-r_{\text{eq}})}-2e^{-\alpha (r-r_{\text{eq}})}),
\end{equation}
where  $ r $  is the displacement of the two atoms from their equilibrium positions $ r_{\text{eq}}$. $D_{0}$
represents the depth of the potential well at the equilibrium $r_{\text{eq}}$, while  $\alpha$ is related to the width of the potential. The  energy eigenvalues of the Morse potential are given as follows, 
\begin{equation}\label{Morsespec}
	E_{n}=-\dfrac{\hbar^2\alpha^2}{2m_r}(\dfrac{\nu-1}{2}-n)^{2},
\end{equation}
where $\nu$ is a parameter related to the diatomic molecule, and $m_r$ is the reduced mass of the oscillating atoms. The particularity of the Morse potential compared to the harmonic potential is that it has a finite number of bound states, i.e., $ n=0,1,\dots ,l-1$, where $l-1=[\frac{\nu-1}{2}]$, and where $[\nu]$ is the integer part of $\nu$. Here, the free Hamiltonian $H_0$  appearing in equation (\ref{LVN_eq}) has the following form,
\begin{equation}
H_0=\sum_{n=0}^{l-1}E_n\ket{n}\bra{n},
\end{equation}
where $\ket{n}$ is the energy eigenstate associated with the energy eigenvalue $E_n$. The interaction between the molecule and the electric field is modelled  as follows,
\begin{equation}
H_u(t)=-\mu u_t,
\end{equation}  where $\mu$ is the electric dipole  given in the position representation $\{\ket{r}\}$  by  $\mu(r)=\mu_0 r\text{e}^{-r/r^{*}}$ with  $ \mu_0 $ and $r^*$ being parameters related to the molecule. The matrix elements of $\mu$  can be computed as follows,
\begin{equation}\label{electricdipol}
\mu_{n,m}=\bra{n}\mu\ket{m}=\int_{-\infty}^{\infty} {\mu}(r)\psi_{n}^{\nu}(r)  \psi_{m}^{\nu}(r){d}{r},
\end{equation}
where $\psi_{n}^{\nu}(r)$ is the  eigenfunction associated with the energy eigenvalue $E_n$ and is given by, 
\begin{equation}\label{psiMorse}
\braket{r|n}=	\psi_{n}^{\nu}(r)=\text{N}_{n}e^{-\frac{y}{2}}y^{s}L_{n}^{2s}(y),
\end{equation}
where we have used the change of variable  $y=\nu e^{-\alpha (r-r_{\text{eq}})}$, and where $L_{n}^{2s}(y)$ are the Laguerre polynomials, $2s=\nu-2n-1$, and
$\text{N}_{n} $ is the normalization factor governed by, 
 \begin{equation}\label{Norm}
	\text{N}_{n}=\sqrt{\frac{\alpha(\nu-2n-1)\Gamma(n+1)}{\Gamma(\nu-n)}},
\end{equation}
where $\Gamma$ is the gamma function\cite{Morse}.
\subsubsection{Lithium Hybrid Diatomic molecule, $^7\text{Li}\hspace*{0.05cm}^2\text{H}$ }
The proposed probabilistic quantum control method is applied here to control the outcome of an observable associated with the Lithium hybrid molecule, $^7\text{Li}\hspace*{0.05cm}^2\text{H}$  whose vibrations are described by the Morse oscillator. In this experiment, we omit the effect of the environment in equation (\ref{LVN_eq}) and consider that the  molecule only interacts with the electric field $u_t$. The Morse parameters associated with this molecule and that are used here can be found in\cite{Herzberg}. Their values are given by  $D_0=2.45090 \hspace{0.1cm}\text{eV}$, $r_{\text{eq}}=2.379 \hspace{0.1cm}\text{{\AA}} $, $m_r=2.5986\times10^{-27}\hspace{0.2cm}\text{Kg}$,
$\alpha=13.956\hspace{0.2cm}\text{{\AA}}^{-1}$, and $\nu\approx 6.1346$. Hence, according to these parameters, the quantum number $n$ takes three values $n=0,1,2$ and the density operator $\rho$ associated with the system is an $l\times l$-matrix, where $l=3$. The  parameters related to the electric dipole can also be found in \cite{Herzberg}. Their values are given by $\mu_0\approx5.8677\hspace{0.2cm} \text{Debye}$, and $r^*\approx1.595\text{{\AA}}$.

As discussed in earlier sections, the quantum system state is described by the vector $\tilde{x}_t$ which is the vectorisation of the density matrix recalled in equation (\ref{x_t_vet}). At the initial time, we assume that the system is prepared in the ground state. This implies that, \begin{equation}\tilde{x}_0=[1 \hspace{0.2cm}0\hspace{0.2cm} 0\hspace{0.2cm}  0\hspace{0.2cm} 0\hspace{0.2cm} 0\hspace{0.2cm} 0\hspace{0.2cm} 0\hspace{0.2cm} 0].
\end{equation}
By omitting the effect of the environment, the evolution of the density matrix elements provided in equation (\ref{rho_elements}) can be simplified as,
	\begin{equation}
	\label{rho_elements_NoEnv}
	\dfrac{d\rho_{n,m}(t)}{d t}=-i\omega_{n,m}\rho_{n,m}(t)+i\dfrac{u(t)}{\hbar}\sum_{k=0}^{l-1=2}(\mu_{n,k}\rho_{k,m}(t)-\rho_{n,k}(t)\mu_{k,m}),
\end{equation} 
where in this example, $\{n,m\}=\{0,1,2\}$, $\omega_{n,m}=\dfrac{E_n-E_m}{\hbar}$ such that $E_n$ are the Morse energy eigenvalues recalled in equation (\ref{Morsespec}), and the matrix elements of the electric dipole,  $\mu_{n,m}$ can be computed using equation  (\ref{electricdipol}). Using the definition of vectorisation of the density matrix provided in equation (\ref{x_t_vet}) along with equation (\ref{rho_elements_NoEnv}) and equation  (\ref{electricdipol}), the parameters of the state equation defined in equation (\ref{state_space}) can be easily obtained as follows, 
\begin{equation}
\tilde{A}=\text{diag}[0 \hspace{0.2cm}0\hspace{0.2cm}0\hspace{0.2cm} -i\omega_{0,1}\hspace{0.2cm}  -i\omega_{0,2}\hspace{0.2cm} -i\omega_{1,0}\hspace{0.2cm} -i\omega_{2,0}\hspace{0.2cm} -i\omega_{1,2}\hspace{0.2cm}-i\omega_{2,1}]
\end{equation}
 \begin{equation}
\tilde{N}=\dfrac{1}{\hbar}\left(
\begin{array}{ccccccccc}
	0 & 0 & 0 &-\mu_{0,1} & -\mu_{2,0} & \mu_{0,1} &\mu_{0,2} & 0 & 0 \\
	0 & 0 & 0 &\mu_{1,0} & 0 & -\mu_{0,1} & 0 &- \mu_{2,1} & \mu_{1,2}\\
	0 & 0 & 0 & 0 & \mu_{2,0} & 0 & -\mu_{0,2} &\mu_{2,1} &- \mu_{1,2} \\
	-\mu_{0,1} & \mu_{0,1} & 0 & \mu_{0,0}-\mu_{1,1} & -\mu_{2,1} & 0 & 0 & 0 & \mu_{0,2} \\
	-\mu_{0,2} & 0 &\mu_{0,2} & -\mu_{1,2} & \mu_{0,0}-\mu_{2,2}  & 0 & 0 & \mu_{0,1}  & 0 \\
	\mu_{1,0}  & -\mu_{1,0} & 0 & 0 & 0 &-\mu_{0,0}+\mu_{1,1} & \mu_{1,2} & -\mu_{2,0} & 0 \\
\mu_{2,0} & 0 & -\mu_{2,0} & 0 & 0 & \mu_{2,1} & -\mu_{0,0}+\mu_{2,2} & 0 & -\mu_{1,0} \\
	0 & -\mu_{1,2}  & \mu_{1,2}  & 0 & \mu_{1,0}  & -\mu_{0,2}  & 0 & \mu_{1,1} -\mu_{2,2} & 0 \\
	0 & \mu_{2,1}  & -\mu_{2,1} & \mu_{2,0} & 0 & 0 & -\mu_{0,1} & 0 & -\mu_{1,1}+\mu_{2,2}\\
\end{array}
\right),\end{equation}
where $\tilde{A}$ is a diagonal $9\times9$-matrix, and the shifting vector $x_e=\tilde{x}_0$. These parameters are then used in equation (\ref{A_B_t}), with $\Delta{t}=0.0167$ to obtain the system matrices, $A$, and $B$ that are required for the evaluation of the optimal control signal. Furthermore, the matrix $D$ given in equation (\ref{outp}), which is also required for the evaluation of the optimal control signal can be obtained by evaluating the target operator which is taken in this example to be a Gaussian operator \cite{RBZ98_2}, defined as follows,
\begin{equation}\label{targetoper}
	\hat{o}=o(r)=\dfrac{\gamma_0}{\sqrt{\pi}}\text{e}^{-\gamma_0^2(r-r^{'})^2},
\end{equation}
with $\gamma_0=47.2590 \hspace{0.2cm}\text{{\AA}}^{-1}
$, and $r^{'}=2.4871 \hspace{0.1cm}\text{{\AA}}$. The matrix representation of the target operator (\ref{targetoper}) can then be obtained as follows,
\begin{equation}
o_{i,j}=\int_{-\infty}^{\infty} {o}(r)\psi_{i}^{\nu}(r)  \psi_{j}^{\nu}(y){d}{r},\hspace{0.3cm}\text{where}\hspace{0.3cm}\{i,j\}=\{0,1,2\}.
\end{equation} 
Therefore, following the vectorisation method described in equation (\ref{x_t_vet}), the operator $D$ that appears in the measurement equation (\ref{outp}) will have the following form,
\begin{equation}
D=[o_{0,0}\hspace{0.2cm} o_{1,1}\hspace{0.2cm}o_{2,2}\hspace{0.2cm}o_{0,1}\hspace{0.2cm}o_{0,2}\hspace{0.2cm}o_{1,0}\hspace{0.2cm}o_{2,0}\hspace{0.2cm}o_{1,2}\hspace{0.2cm}o_{2,1}].
\end{equation}
\begin{figure}[t]
	\begin{subfigure}{.5\textwidth}
		\centering
		\includegraphics[width=.8\linewidth]{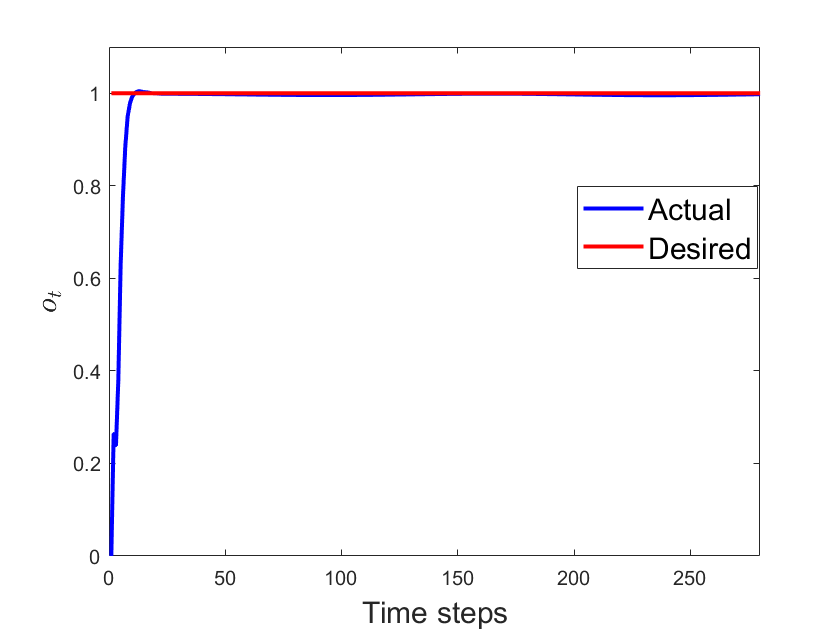}  
		\caption{}
		\label{fig:sub-first}
	\end{subfigure}
	\begin{subfigure}{.5\textwidth}
		\centering
		\includegraphics[width=.8\linewidth]{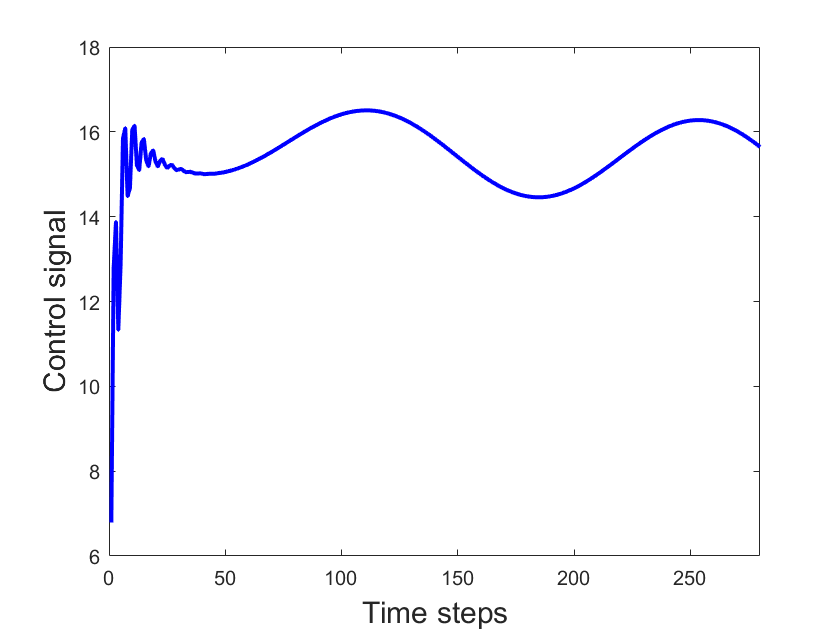}  
		\caption{}
		\label{fig:sub-second}
	\end{subfigure}
	\caption{(\ref{fig:sub-first} ): The blue curve represents time evolution of the measurement output, $o_t$ of the considered Morse potential associated with the Lithium hydride molecule $^7\text{Li}\hspace*{0.05cm}^2\text{H}$  under the effect of the control signal, $u_t$.  The red curve is the desired value of the Gaussian operator, i.e., ${o}_{d}=1$.  (\ref{fig:sub-second}): The time evolution of  the control signal,  $u_t$ responsible for transforming the system measured output, $o_t$ to the predefined desired output, ${o}_{d}=1$. In this figure atomic units were used.}
	\label{figure_1}
\end{figure}
Following the computation of  the $A$, $B$ and $D$ operators, and taking $G_r=0.00000001$, and $\Omega=0.28950$, the matrices $M_t$ and $P_t$ defined in equations (\ref{Mt}) and (\ref{Pt}), respectively are evaluated at each instant of time as discussed in Algorithm (\ref{alg:cap}). The $M_t$ and $P_t$ matrices are then used to compute the electric field, i.e., the control signal, $u_{t-1}$ using (\ref{optimalcapp}). These steps are repeated until the measurement output, $o_t$ became as close as possible to the  predefined desired value ${o}_{d}$, which is taken to be equal to $1$ in this experiment, i.e., ${o}_{d}=1$.

The result of this experiment is shown in Figure (\ref{fig:sub-first}). As can be clearly seen from this figure, the output, $o_t$ of the considered Morse potential associated with the Lithium hydride molecule $^7\text{Li}\hspace*{0.05cm}^2\text{H}$ increases quickly and reaches in a few time steps the predefined desired value, ${o}_{d}=1$ under the effect of the derived electric field  and then stabilises at that value. This shows that the specified control objective is achieved and demonstrates the effectiveness of our proposed method. Here, as can be seen from the figure, the control objective is achieved and kept for longer times steps. Figure (\ref{fig:sub-second}) shows the electric field that is obtained according to Algorithm (\ref{alg:cap}) and that is responsible for achieving the control objective. 
\subsection{Spin systems}
This section demonstrates the effectiveness of the proposed control method in controlling quantum systems with 
spin $j$, and, more generally, any quantum system with three
observables $J_i$ $(i = 1,2,3)$ satisfying the angular momentum
commutation relations $[J_1,J_2]=i\hbar\epsilon_{123}J_3$ (with $\epsilon_{123}$ being the Levi-civita symbol). 

A spin-j state can be described by,
\begin{equation}
\ket{\psi}=\sum_{m=-j}^{j}c_m\ket{j,m},
\end{equation}
where $c_m$ are complex coefficients that satisfy $\sum_{m=-j}^{j}|c_m|^2=1$, and the basis $\{\ket{j,m}\equiv\ket{m}, m=-j,\dots,j\}$ is formed by the common eigenstates of the operators $J^2=J_1^2+J_2^2+J_3^2$ and $J_3$.  Without loss of generality, we consider that the spin system interacts only with an external electric field $u_t$. Additionally, the Hamiltonian of the system is taken to be given by, 
\begin{equation}
H=\hbar\omega H_0+\theta H_u(t),
\end{equation}
where $\omega$ and $\theta$ are parameters related to the spin system, and $H_0$ and $H_u(t)$ are some Hermitian operators \cite{spin}. In the following, we take $\hbar\omega=\theta=1$ for simplicity reasons.
 The objective now is to transfer the system initially prepared in a given state $\ket{\psi}_q$ to another state $\ket{\psi}_g$ using the interaction between the system and the external electric field, thus deriving an optimal control sequence that allows that transition. For a projector operator, the output $o_t$ at each time instant $t$ is nothing but the squared overlap between the state $\ket{\psi(t)}$ and the  state $\ket{\psi_g}$.  Thus, the control objective of transferring the system from the state $\ket{\psi}_q$ to the state $\ket{\psi}_g$ is equivalent to maximizing the squared overlap between the states $\ket{\psi(t)}$ and $\ket{\psi_g}$. This means that we should take $o_d=1$.
 
Thereafter, we consider spin-1/2 system, i.e., $j=\dfrac{1}{2}$ and spin-1 system, i.e., $j=1$.

\subsubsection{spin-1/2 system }
The proposed control method is applied in this section to a spin-1/2 system interacting with an external electric field, $u_t$ with the interaction being modelled as follows,
\begin{equation}\label{58_}
H=H_0+H_u(t)=\dfrac{1}{2}\sigma_3+\dfrac{1}{2}(\sigma_1+\sigma_2)u_t,
\end{equation} 
where  $\sigma_1$, $\sigma_2$ and $\sigma_3$ are the Pauli matrices given in the basis $\{\ket{-}\equiv\ket{\dfrac{1}{2},-\dfrac{1}{2}},\ket{+}\equiv\ket{\dfrac{1}{2},\dfrac{1}{2}}\}$ as,
\begin{equation}\label{PauliMat}
\sigma_1	=\left(
\begin{array}{cc}
	0&1\\
	1 & 0 \\
\end{array}
\right),\hspace{0.4cm}	\sigma_2=\left(
\begin{array}{cc}
0 &-i\\
	i & 0 \\
\end{array}
\right),\hspace{0.4cm}\sigma_3	=\left(
\begin{array}{cc}
	1 &0\\
	0 & -1\\
\end{array}
\right).
\end{equation}
The control objective in this example is to find an optimal control sequence that can transform the system from the initial state $\ket{-}$ to the state $\ket{+}$ using the proposed probabilistic quantum control approach. Hence, the target operator is taken to be $D=[0\hspace{0.2cm}1 \hspace{0.2cm}0\hspace{0.2cm}0]$, which is nothing but $D=(\text{vect}(\Pi_+))^T$, where $\Pi_+=\ket{+}\bra{+}$.

Furthermore, using the operators, $\tilde{A}$ and $\tilde{N}$ whose calculations along with their values are given in details in Appendix (\ref{spinn_1_2}), the  $A$, and $B$ matrices that are required for the evaluation of the optimal control signal are evaluated using equation (\ref{A_B_t}), with $\Delta{t}=0.0505$. Following the computation of  the $A$, $B$ and $D$ operators, and taking $G_r=0.00001$, and $\Omega=1.0$, the matrices  $M_t$ and $P_t$ defined in equations (\ref{Mt}) and (\ref{Pt}), respectively are evaluated at each instant of time as discussed in Algorithm (\ref{alg:cap}). The $M_t$ and $P_t$ matrices are then used to compute the electric field, i.e., the control signal, $u_{t-1}$. These steps are repeated until the measurement output, $o_t$ became as close as possible to the  predefined desired value ${o}_{d}$, which is evaluated to be equal to $1$ in this experiment.

Figure (\ref{spin3}) shows the time evolution of the population of the $\ket{+}$ state, $\rho_{11}(t)$ for a spin-1/2 system initially prepared in the state $\ket{-}$ as a result of its interaction with the derived  electric field as calculated from equation (\ref{optimalcapp}). Figure (\ref{spin1_4}) shows the time evolution of the optimal control signal. From Figure (\ref{spin3}), it can be clearly seen that the population $\rho_{11}$ have reached in a few time steps the predefined desired target value ${o}_{d}=1$. 
\begin{figure}[h]
	\begin{subfigure}{.5\textwidth}
		\centering
		\includegraphics[width=.8\linewidth]{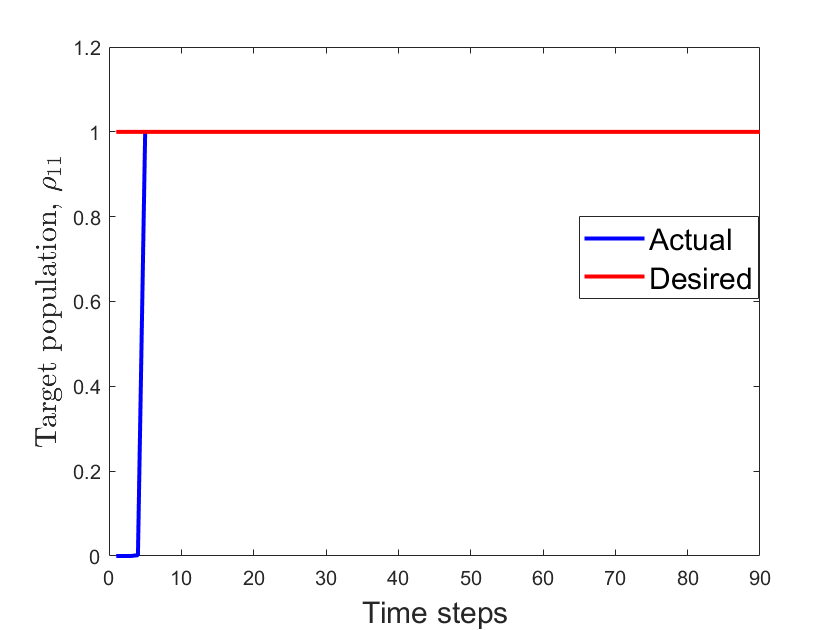}  
		\caption{}
		\label{spin3}
	\end{subfigure}
	\begin{subfigure}{.5\textwidth}
		\centering
		\includegraphics[width=.8\linewidth]{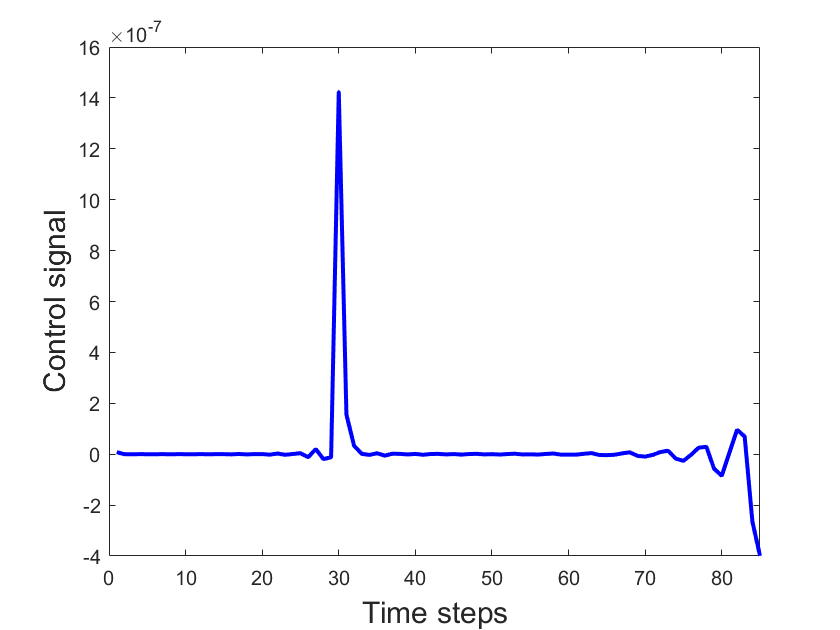}  
		\caption{}
		\label{spin1_4}
	\end{subfigure}
	\caption{(\ref{spin3} ): The blue curve represents the time evolution of the population of the $\ket{+}$ state, $\rho_{11}(t)$ of the considered spin-1/2 system. The red curve is the desired value ${o}_{d}=1$. (\ref{spin1_4}): Time evolution of  the control signal,  $u_t$ responsible of achieving the control objective.}
	\label{figure_2_2}
\end{figure}
\subsubsection{spin-1 system}
To further demonstrate the efficacy of the proposed probabilistic quantum control method,  it is applied here to control a spin-1 system initially prepared in the state $\ket{-1}$. For this spin-1 system, the Hamiltonian  describing the interaction of the system is taken to be given by, 
\begin{equation}
	H=H_0+H_u(t)=\left(\begin{array}{ccc}
		\dfrac{3}{2} & 0 & 0  \\
		0 & 1 & 0  \\
		0 & 0&0   \\
	\end{array}\right)+\left(\begin{array}{ccc}
		0 & 0 & 1  \\
		0 & 0 & 1  \\
		1 & 1&0   \\
	\end{array}\right) u(t).
\end{equation} 
Two sets of experiments were conducted. In the first experiment, the control objective is specified as to transfer the system to state $\ket{0}$. The target operator that corresponds to this objective is the projector $\Pi_0=\ket{0}\bra{0}$ which in vectorised form is given as $D=(\text{vect}(\Pi_0)^T)^T$ . Furthermore, using the operators, $\tilde{A}$ and $\tilde{N}$ whose calculations along with their values are given in details in Appendix (\ref{spinn_1}), the  $A$, and $B$ matrices that are required for the evaluation of the optimal control signal are evaluated using equation (\ref{A_B_t}), with $\Delta{t}=0.0067$. Following the computation of  the $A$, $B$ and $D$ operators, and taking $G_r=0.000000001$, and $\Omega=1$, the matrices  $M_t$ and $P_t$ defined in equations (\ref{Mt}) and (\ref{Pt}), respectively are evaluated at each instant of time as discussed in Algorithm (\ref{alg:cap}). The $M_t$ and $P_t$ matrices are then used to compute the electric field, $u_{t-1}$. These steps are repeated until the measurement output, $o_t$ became as close as possible to the  predefined desired value ${o}_{d}$, which is evaluated to be equal to $1$ in this experiment. Figure (\ref{figure5}) shows the time evolution of the population of the $\ket{0}$ state, $\rho_{11}(t)$ for the considered spin-1 system initially prepared in the state $\ket{-1}$ as a result of its interaction with the derived electric field as calculated from equation (\ref{optimalcapp}). The time evolution of the obtained optimal control signal that allowed the achievement of this control objective is given in Figure (\ref{figure6}). From these two figures, it can be clearly seen that the designed probabilistic controller was able to achieve the control objective and transfer the spin-1 system state from the initial state, $\ket{-1}$, to the desired state, $\ket{0}$.

In the second experiment, the control objective is specified as to transfer the system from the state $\ket{-1}$ to the state $\ket{1}$. In this experiment, the  $A$, and $B$ matrices are calculated  by substituting  the $\tilde{A}$ and $\tilde{N}$ as given in Appendix (\ref{spinn_1}) in equation  (\ref{A_B_t}) and using $\Delta{t}=0.0404$. By taking $G_r=0.000000001$, and $\Omega=0.11$, and following the procedure detailed in Algorithm (\ref{alg:cap}) the optimal control signal is calculated at each time instant  and used to control the spin-1 system. The time evolution of the population of the $\ket{1}$ state, $\rho_{22}(t)$ as a result of the obtained control signal is shown in Figure (\ref{figure7}), while the evolution of the optimal control is shown in Figure (\ref{figure8}). As can be seen from these figures the specified control objective is achieved and the obtained control input was able to transfer the spin-1 system state from state $\ket{-1}$ to state $\ket{1}$ and maintain it at that value. 

This demonstrates the effectiveness of the proposed method.
\begin{figure}[h!]
	\begin{subfigure}{.5\textwidth}
		\centering
		\includegraphics[width=.8\linewidth]{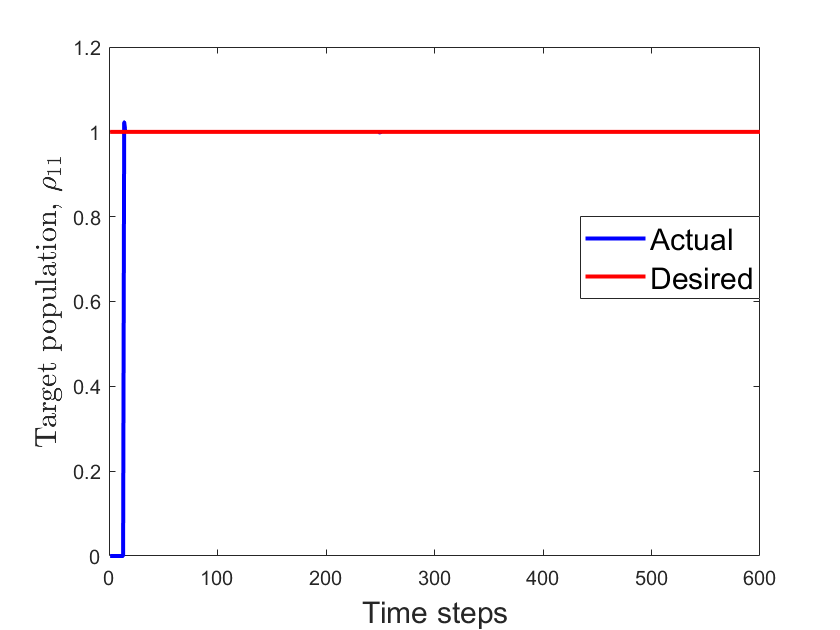}  
		\caption{}
		\label{figure5}
	\end{subfigure}
	\begin{subfigure}{.5\textwidth}
		\centering
		\includegraphics[width=.8\linewidth]{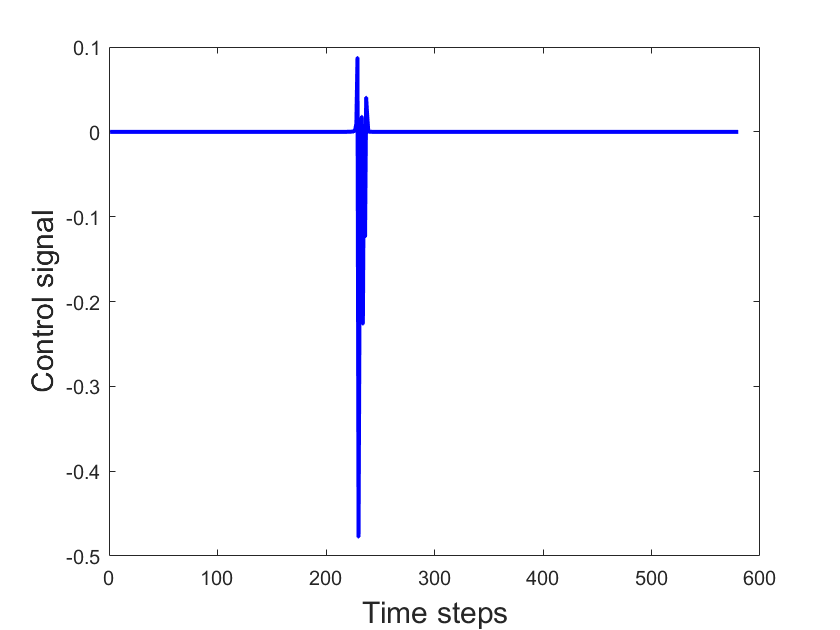}  
		\caption{}
		\label{figure6}
	\end{subfigure}\\
	\begin{subfigure}{.5\textwidth}
	\centering
	\includegraphics[width=.8\linewidth]{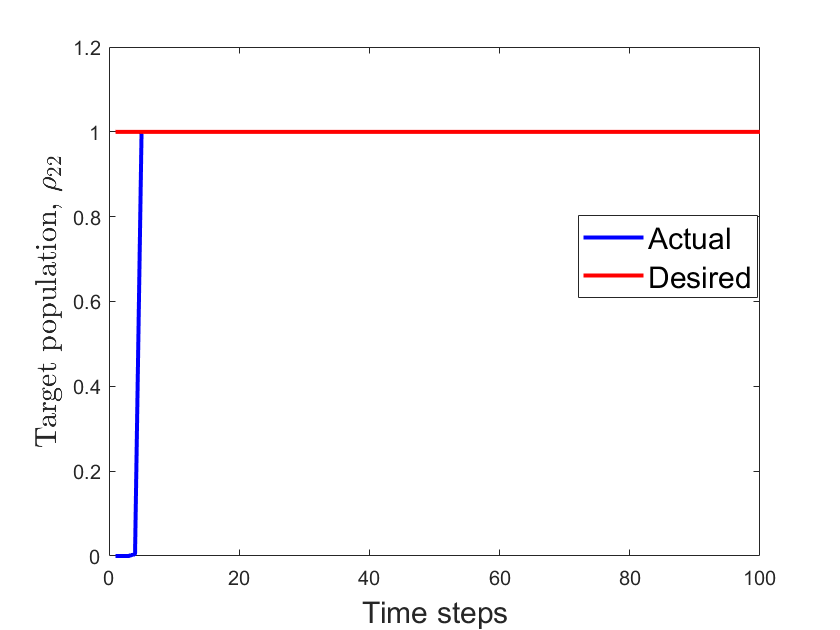}  
	\caption{}
	\label{figure7}
\end{subfigure}
\begin{subfigure}{.5\textwidth}
	\centering
	\includegraphics[width=.8\linewidth]{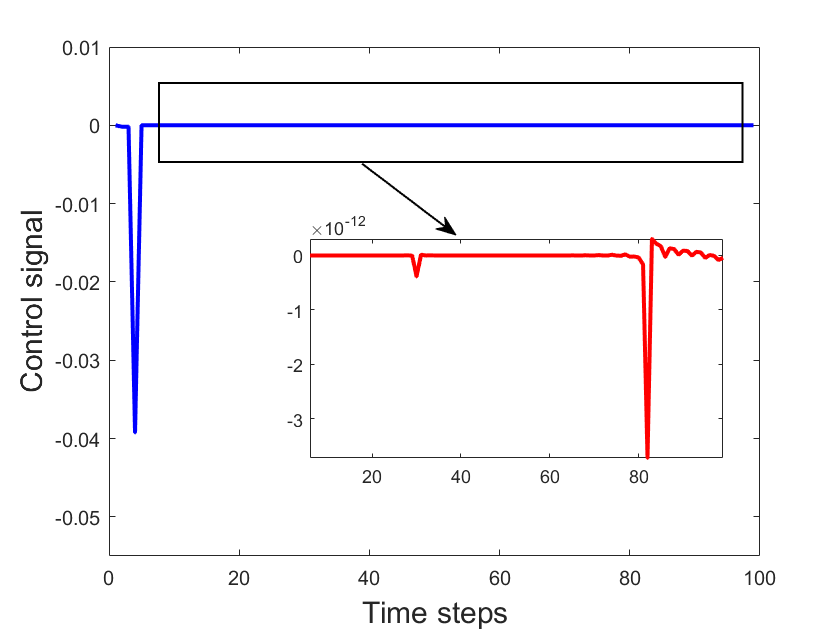}  
	\caption{}
	\label{figure8}
\end{subfigure}
	\caption{(\ref{figure5} ): The blue curve represents the time evolution of the population of the $\ket{0}$ state, $\rho_{11}(t)$ of the considered spin-1 system. The red curve is the desired value ${o}_{d}=1$. (\ref{figure6}): Time evolution of  the control signal,  $u_t$ responsible for achieving the control objective. (\ref{figure7} ):  The blue curve represents the time evolution of the population of the $\ket{1}$ state, $\rho_{22}(t)$ of the considered spin-1 system. The red curve is the desired value ${o}_{d}=1$.  (\ref{figure8}):  Time evolution of  the control signal,  $u_t$ responsible for achieving the control objective.}
	\label{figure_3}
\end{figure}
\section{Final comments}\label{conclu}
In this paper, we considered the control problem of quantum physical systems  using a probabilistic approach. First, we constructed the state space model associated with quantum systems and showed that their complete characterisation can be only obtained by pdfs. Considering the minimisation of the distance between the actual joint pdf of the quantum system dynamics and an external electric field, and a predefined desired joint pdf, we provided the solution to the control problem for general quantum systems that are described by arbitrary pdfs. The distance between joint pdfs played the role of the cost function needed to be minimised and  is measured by the Kullback-Leibler divergence.  Furthermore, we provided the solution of the control problem for a particular class of quantum physical systems that are described by bilinear equations and Gaussian pdfs which depends on the parameters of the physical systems. There, it is shown that the obtained randomised controller is a feedback controller that can be calculated analytically. 

Our approach is fully probabilistic and is different to the current state of the art of quantum control which is mainly based on using deterministic objective functions. Additionally, the proposed approach is more accurate since it takes into consideration the quantum systems noises and uncertainties in the derivation of the optimal control law. 

To show the effectiveness of the proposed method we particularly applied it to control the outcome of particular target operators for a real molecule interacting with an external electric field and for spin systems. By analysing different experiments, we showed that our algorithm is efficient and in contrast to previous iterative methods the expected value of the target operator does not take a lot of time to converge to its desired value. 

A possible extension of this work is to apply the algorithm to consider the effect of the environment and  control the decoherence  occurring when a quantum system interacts with a large environment, which is one of the main challenges of the development of quantum technologies.  
\section*{Acknowledgements:} This work was supported by the EPSRC grant EP/V048074/1.
\section*{Appendices}

\appendix
\renewcommand{\theequation}{A.\arabic{equation}}
\setcounter{equation}{0} 
 
\section{Vectorisation of the density operator}\label{vect_app}
In this section we provide the matrices $\tilde{A}$ and $\tilde{N}$ appearing in equation (\ref{NLVN_pa}) for a tow-dimensional system. The vectorisation (\ref{x_t_vet}) of the density operator $\rho(t)	=\left(
\begin{array}{cc}
	\rho_{00}(t)&\rho_{01}(t)\\
	\rho_{10}(t) & \rho_{11}(t) \\
\end{array}\right)$ is given now by, 
\begin{equation}
\tilde{	x}(t)=\tilde{	x}_t=\left(
	\begin{array}{cc}
		\rho_{00}(t)\\\rho_{11}(t)\\
		\rho_{01} (t)\\ \rho_{10} (t)\\
	\end{array}\right),
\end{equation}
From equation (\ref{rho_elements}), we can show that now the matrices $\tilde{A}$ and $\tilde{N}$ are given as follows 
\begin{equation}
	\tilde{A}=\left(
	\begin{array}{cccc}
		-\gamma_{0,0} &\Gamma_{1\to 0}& 0 & 0\\
		\Gamma_{0\to 1} & -\gamma_{1,1} & 0& 0\\
		0 & 0 & -i\omega_{0,1}-\gamma_{0,1}& 0 \\
		0 & 0 & 0& -i\omega_{1,0}-\gamma_{1,0}\\
	\end{array}\right)
\end{equation}
and 
\begin{equation}
	\tilde{N}=\dfrac{1}{\hbar}\left(
	\begin{array}{cccc}
		0 &0& -\mu_{1,0} & \mu_{0,1}\\
		0 & 0 & \mu_{1,0}& -\mu_{0,1}\\
		-\mu_{0,1}& \mu_{0,1} & \mu_{0,0}-\mu_{1,1} & 0 \\
		\mu_{1,0}& -\mu_{1,0} & 0& \mu_{1,1} -\mu_{0,0}\\
	\end{array}\right).
\end{equation}
The results can be generalised to any $l$-dimensional physical system in a straightforward way. 
	\section{Complex normal distribution}\label{AppA_0}
	\setcounter{equation}{0} 
\renewcommand{\theequation}{B.\arabic{equation}}
The complex normal distribution for a complex random variable $x_t\in \mathbf{C}^n$ is given by,
\begin{align}\label{norcomplx}
	&\mathcal{N}_{\mathcal{C}}(\mu_{t},{\Gamma}^{'},C^{'})=\dfrac{1}{\pi^n\sqrt{\det(\Gamma^{'})\det(\bar{\Gamma}^{'}-(C^{'})^\dagger(\Gamma^{'})^{-1}C^{'})}} \exp\bigg[ \dfrac{1}{|\Gamma^{'}|^2-|C^{'}|^2}\bigg(- 0.5{({{x}_t}- {{\mu} _t})^{\dagger}}{\bar{\Gamma}^{'}}({x_t} - {\mu _t})\nonumber\\& + 0.5{({x_t} - {\mu _t})^T}{\bar{C}^{'}}({x}_t - {{\mu} _t}) + 0.5{({{x}_t} - {{\mu} _t})^{\dagger}}{C^{'}}({\bar{x}_t} - {\bar{\mu} _t}) - 0.5{({x_t} - {\mu _t})^T}{\Gamma^{'}}({\bar{x}_t} - {\bar{\mu} _t})\bigg)\bigg],
\end{align}
where, 
\begin{align}
\label{eq:relation}
x_t^\dagger&=\bar{x}^T,\nonumber\\
\mu_t& =E(x_t),\nonumber\\ 
\Gamma^{'}&= E((x_t-\mu_t)(x_t-\mu_t)^\dagger),\nonumber\\ 
C^{'}&= E((x_t-\mu_t)(x_t-\mu_t)^T).
\end{align}
Here  $E(.)$ stands for the expected value, $x_t^T$ is the matrix transpose of $x_t$, and $x_t^\dagger$ is the conjugate transpose of $x_t$. The matrices $\mu_t$, $\Gamma^{'}$ and $C^{'}$ are respectively the mean, the covariance and the relation or pseudo-covariance matrices.
\section{Calculation of the performance index $\gamma(x_{t-1})$}\label{AppA}
\setcounter{equation}{0} 
\renewcommand{\theequation}{C.\arabic{equation}}
The calculation of the From of the performance index function as given in Eq. (\ref{49}) requires the evaluation of the coefficient $\beta(u_{t-1},x_{t-1})$ defined in equation (\ref{beta}) and repeated here,
	\begin{align}\label{betaapp1}
		\beta(u_{t-1},x_{t-1}) = \int s(x_{t}|u_{t-1}, x_{t-1}) s(o_t| x_{t}) \ln \bigg ( \frac{ s(o_t| x_{t})}{^Is(o_t| x_{t})} \dfrac{1}{ {\gamma}(x_{t})}\bigg) \mathrm{d} x_{t}\mathrm{d} o_t.
	\end{align}
Let us first calculate  $\ln \bigg ( \frac{ s(o_t| x_{t})}{^Is(o_t| x_{t})} \dfrac{1}{ {\gamma}(x_{t})}\bigg)$. From equations (\ref{eq:eq5}),  (\ref{eq:eq2}) and (\ref{49}), we have,
	\begin{align}\label{betaapp2}
		& \ln \bigg ( \frac{ s(o_t| x_{t})}{^Is(o_t| x_{t})} \dfrac{1}{ {\gamma}(x_{t})}\bigg)\nonumber\\
		& 
		=- 0.5{({o_t} - D{x_t})^T}{G ^{ - 1}}({o_t} - D{x_t})+ 0.5{({o_t} - {{o}_{d}})^T}G_r^{ - 1}({o_t} - {{o}_{d}}) +0.5x_{t}^T{M_{t}}{x_{t}} +0.5P_{t}{x_{t}} + 0.5{\omega_{t}}+0.5\ln(\dfrac{|G_r|}{|G|}),	\end{align}
	where  $|G|$ stands for the determinant of the matrix $G$.
 This implies that,
\begin{align}
		& \ln \bigg ( \frac{ s(o_t| x_{t})}{^Is(o_t| x_{t})} \dfrac{1}{ {\gamma}(x_{t})}\bigg)\nonumber\\
& =  - 0.5o_t^T{G ^{ - 1}}{o_t} - 0.5x_t^T{D^T}{G ^{ - 1}}D{x_t} + o_t^T{G ^{ - 1}}D{x_t}  + 0.5o_t^{T} G_r^{ - 1}{o_t}- o_t^{T} G_r^{ - 1}{{o}_{d}} +0.5x_{t}^{T} {M_{t}}{x_{t}} +0.5P_{t}{x_{t}} +0.5 {\omega_{t}}\nonumber\\
		& + 0.5{o}_{d}^{T} G_r^{ - 1}{{o}_{d}}+0.5\ln(\dfrac{|G_r|}{|G|})  \nonumber\\
		&=  - 0.5o_t^{T} ({G ^{ - 1}} - G_r^{ - 1}){o_t} + o_t^{T} ({G ^{ - 1}}D{x_t} - G_r^{ - 1}{{o}_{d}}) -0.5 x_t^{T} D^{T} {G ^{ - 1}}D{x_t} +0.5x_{t}^{T} {M_{t}}{x_{t}} +0.5P_{t}{x_{t}} + 0.5{\omega_{t}} \nonumber\\
		&+ 0.5{o}_{d}^{T} G_r^{ - 1}{{o}_{d}}+0.5\ln(\dfrac{|G_r|}{|G|}).
	\end{align}
By substituting it back and integrate over  $o_t$, we find,
	\begin{align}\label{betaapp3}
		&\beta(u_{t-1},x_{t-1})=\int s(x_{t}|u_{t-1}, x_{t-1}) s(o_t| x_{t})\bigg( - 0.5o_t^{T} ({G ^{ - 1}} - G_r^{ - 1}){o_t} + o_t^{T} ({G ^{ - 1}}D{x_t} - G_r^{ - 1}{{o}_{d}}) -0.5 x_t^{T} D^{T} {G ^{ - 1}}D{x_t}\nonumber\\
		&  +0.5x_{t}^{T} {M_{t}}{x_{t}}+0.5P_{t}{x_{t}} + 0.5{\omega_{t}}+ 0.5{o}_{d}^{T} G_r^{ - 1}{{o}_{d}} +0.5\ln(\dfrac{|G_r|}{|G|})\bigg) \mathrm{d} x_{t}\mathrm{d} o_t\nonumber\\
		& = \int {s\left( {{x_t}|{u_{t-1}},{\rm{ }}{x_{t - 1}}} \right){\rm{ }}} \bigg( 0.5x_t^{T} {D^{T} }G_r^{ - 1}D{x_t} - x_t^{T} {D^{T} }G_r^{ - 1}{{o}_{d}}+0.5x_{t}^{T} {M_{t}}{x_{t}} +0.5P_{t}{x_{t}} + 0.5{\omega_{t}+}0.5{o}_{d}^{T} G_r^{ - 1}{{o}_{d}} \nonumber\\
		&-0.5 \Tr(G(G^{-1}-G_r^{ - 1})) +0.5\ln(\dfrac{|G_r|}{|G|})\bigg)d{x_t} \nonumber\\
		& = \int {s\left( {{x_t}|{u_{t-1}},{\rm{ }}{x_{t - 1}}} \right){\rm{ }}} \bigg( 0.5x_t^{T} ({D^{T} }G_r^{ - 1}D+M_{t}){x_t}+ 0.5(P_{t}-2{o}_{d}^{T} G_r^{ - 1}{D})x_t +0.5{\omega_{t}} + 0.5{o}_{d}^{T} G_r^{ - 1}{{o}_{d}} +0.5\ln(\dfrac{|G_r|}{|G|})  \nonumber\\
		&-0.5 \Tr(G(G^{-1}-G_r^{ - 1}))\bigg)d{x_t}
	\end{align}
	Now, we integrate over  $x_t$, we find,
	\begin{align} \label{betaapp4}
		&\beta(u_{t-1},x_{t-1})= 0.5\mu_t^{T} ({D^{T} }G_r^{ - 1}D+M_{t}){\mu_t} +  0.5(P_{t}-2{o}_{d}^{T} G_r^{ - 1}{D})\mu_t+ 0.5{\omega_{t}} + 0.5{o}_{d}^{T} G_r^{ - 1}{{o}_{d}}  +0.5\ln(\dfrac{|G_r|}{|G|})\nonumber\\
		&-0.5 \Tr(G(G^{-1}-G_r^{ - 1}))-0.5\Tr(\bar{C}^{-1}({D^{T} }G_r^{ - 1}D+M_{t})),
	\end{align}
where $C=\dfrac{C^{'}}{|\Gamma^{'}|^2-|C^{'}|^2}$. 
By replacing $\mu_t=Ax_{t-1}+Bu_{t-1}$, we find,
	\begin{align}\label{betaapp5}
	&\beta(u_{t-1},x_{t-1})= 0.5(Ax_{t-1}+Bu_{t-1})^{T} ({D^{T} }G_r^{ - 1}D+M_{t}){(Ax_{t-1}+Bu_{t-1})} +0.5(P_{t}-2{{o}_{d}^{T} }G_r^{ - 1}D)  (Ax_{t-1}+Bu_{t-1}) \nonumber\\&
+	0.5{\omega_{t}} + 0.5{o}_{d}^{T} G_r^{ - 1}{{o}_{d}}  +0.5\ln(\dfrac{|G_r|}{|G|})-0.5 \Tr(G(G^{-1}-G_r^{ - 1}))-0.5\Tr(\bar{C}^{-1}({D^{T} }G_r^{ - 1}D+M_{t}))
	\nonumber\\	&= 0.5x_{t-1}^{T} A^{T} ({D^{T} }G_r^{ - 1}D+M_{t}){Ax_{t-1}} +0.5(P_{t}-2{{o}_{d}^{T} }G_r^{ - 1}D) Ax_{t-1} + 0.5 u_{t-1}^{T} B^{T}({D^{T} }G_r^{ - 1}D+M_{t}){Bu_{t-1}} \nonumber\\
	&+ u_{t-1}^{T} B^{T}\big(({D^{T} }G_r^{ - 1}D+M_{t}){Ax_{t-1}}+0.5(P_{t}-2{{o}_{d}^{T} }G_r^{ - 1}D)^{T} \big) + 
	0.5{\omega_{t}} + 0.5{o}_{d}^{T} G_r^{ - 1}{{o}_{d}}  +0.5\ln(\dfrac{|G_r|}{|G|})\nonumber\\&-0.5 \Tr(G(G^{-1}-G_r^{ - 1}))-0.5\Tr(\bar{C}^{-1}({D^{T} }G_r^{ - 1}D+M_{t}))
\end{align}
By projecting the form of $\beta$ found in (\ref{betaapp5}) and the ideal distribution of the controller provided in equation (\ref{eq:eq3}), in the definition of $\gamma(x_{t-1})$ given in equation (\ref{gamma2}), we find that,
	\begin{align}\label{gamma}
		&\gamma(x_{t-1}) = \int {^Ic(u_{t-1}|x_{t-1}) \exp[{-\beta(u_{t-1},x_{t-1})}] \mathrm{d} u_{t-1}}, \nonumber \\
		& =(2\pi )^{-1/2}|\Omega|^{-1/2}\int {\exp } \bigg[ - 0.5{({u_{t-1}} - {u_r})^{T} }{\Omega ^{ - 1}}({u_{t-1}} - {u_r}) -0.5x_{t-1}^{T} A^{T} ({D^{T} }G_r^{ - 1}D+M_{t}){Ax_{t-1}} -0.5(P_{t}-2{{o}_{d}^{T} }G_r^{ - 1}D) \nonumber\\
		&Ax_{t-1} - 0.5u_{t-1}^{T} B^{T} ({D^{T} }G_r^{ - 1}D+M_{t}){Bu_{t-1}}-u_{t-1}^{T} B^{T} \big(({D^{T} }G_r^{ - 1}D+M_{t}){Ax_{t-1}}+0.5(P_{t}-2{{o}_{d}^{T} }G_r^{ - 1}D)^{T} \big)  - 0.5{\omega_{t}}\nonumber\\& - 0.5{o}_{d}^{T} G_r^{ - 1}{{o}_{d}}  -0.5\ln(\dfrac{|G_r|}{|G|})+0.5 \Tr(G(G^{-1}-G_r^{ - 1}))+0.5\Tr(\bar{C}^{-1}({D^{T} }G_r^{ - 1}D+M_{t}))\bigg] {\rm{d}}{u_{t-1}} \nonumber\\
		&=(2\pi)^{-1/2}|\Omega|^{-1/2} {\exp } \bigg(-0.5x_{t-1}^{T} A^{T} ({D^{T} }G_r^{ - 1}D+M_{t}){Ax_{t-1}} -0.5(P_{t}-2{{o}_{d}^{T} }G_r^{ - 1}D) Ax_{t-1} 
		- 0.5{\omega_{t}}- 0.5{o}_{d}^{T} G_r^{ - 1}{{o}_{d}}  \nonumber\\& -0.5\ln(\dfrac{|G_r|}{|G|})+0.5 \Tr(G(G^{-1}-G_r^{ - 1}))+0.5\Tr(\bar{C}^{-1}({D^{T} }G_r^{ - 1}D+M_{t}))
			-0.5u_r^{T} \Omega^{-1}u_r\bigg)\nonumber\\
		&\int {\exp } \bigg(-0.5u_{t-1}^{T} (\Omega^{-1}+B^{T} ({D^{T} }G_r^{ - 1}D+M_{t})B){u_{t-1}}+u_{t-1}^{T} \big(\Omega^{-1}u_r-B^{T} ({D^{T} }G_r^{ - 1}D+M_{t}){Ax_{t-1}}\nonumber\\
		&-0.5B^{T} (P_{t}-2{{o}_{d}^{T} }G_r^{ - 1}D)^{T} \big) \bigg) {\rm{d}}{u_{t-1}} 
			\end{align} 
The integral in equation (\ref{gamma}) is a particular case of the general multiple integral given in Theorem (10.5.1) in \cite{raybill}. Hence, it follows that,
	\begin{align}\label{gamma_1}
&\gamma(x_{t-1})= {\exp } \bigg(-0.5x_{t-1}^{T} A^{T} ({D^{T} }G_r^{ - 1}D+M_{t}){Ax_{t-1}} -0.5(P_{t}-2{{o}_{d}^{T} }G_r^{ - 1}D) Ax_{t-1} - 0.5{\omega_{t}}- 0.5{o}_{d}^{T} G_r^{ - 1}{{o}_{d}}  \nonumber\\& -0.5\ln(\dfrac{|G_r|}{|G|})+0.5 \Tr(G(G^{-1}-G_r^{ - 1}))+0.5\Tr(\bar{C}^{-1}({D^{T} }G_r^{ - 1}D+M_{t}))
-0.5u_r^{T} \Omega^{-1}u_r\bigg) \nonumber\\
&{\exp }\bigg(0.5\big(\Omega^{-1}u_r-B^{T} ({D^{T} }G_r^{ - 1}D+M_{t}){Ax_{t-1}}-0.5B^{T} (P_{t}-2{{o}_{d}^{T} }G_r^{ - 1}D)^{T} \big)^{T} 
\big(\Omega^{-1}+B^{T} ({D^{T} }G_r^{ - 1}D+M_{t})B\big)^{-1}\nonumber\\
&
\big(\Omega^{-1}u_r-B^{T} ({D^{T} }G_r^{ - 1}D+M_{t}){Ax_{t-1}}-0.5B^{T} (P_{t}-2{{o}_{d}^{T} }G_r^{ - 1}D)^{T} \big)
 \bigg)\times{|\Omega|^{-1/2}}|\Omega^{-1}+B^{T} ({D^{T} }G_r^{ - 1}D+M_{t})B|^{-1/2}
	\end{align}
Finally,
	\begin{align}\label{appendb2}
		&-\ln{(\gamma(x_{t-1}))}\nonumber\\
		&=  0.5x_{t-1}^{T} \bigg(A^{T} (D^{T} G_r^{-1}D+M_{t})A-A^{T} (D^{T} G_r^{-1}D+M_{t})^{T} B
		\big(\Omega^{-1}+B^{T} ({D^{T} }G_r^{ - 1}D+M_{t})B\big)^{-1}
		\nonumber\\
	&	B^{T} (D^{T} G_r^{-1}D+M_{t})A\bigg)x_{t-1}
	\nonumber\\
&	+0.5\bigg((P_{t}-2{o}_{d}^{T} G_r^{-1}D)A+2(\Omega^{-1}u_r-0.5B^{T} (P_{t}^{T} -2D^{T} G_r^{-1}{o}_{d}))^{T} 
		\big(\Omega^{-1}+B^{T} ({D^{T} }G_r^{ - 1}D+M_{t})B\big)^{-1}\nonumber\\
		&B^{T} (D^{T} G_r^{-1}D+M_{t})A
	\bigg)	x_{t-1}	\nonumber\\
	&+0.5\bigg({\omega_{t}}+ {o}_{d}^{T} G_r^{ - 1}{{o}_{d}}   +\ln(\dfrac{|G_r|}{|G|})- \Tr\big(G(G^{-1}-G_r^{ - 1})\big)-\Tr(\bar{C}^{-1}({D^{T} }G_r^{ - 1}D+M_{t}))
	+u_r^{T} \Omega^{-1}u_r	\nonumber\\
	&-\big( \Omega^{-1}u_r-0.5B^{T} (P_{t}^{T} -2{D}^{T}G_r^{ - 1}{{o}_{d}})\big)^{T} 
	\big(\Omega^{-1}+B^{T} ({D^{T} }G_r^{ - 1}D+M_{t})B\big)^{-1}	
	\big( \Omega^{-1}u_r-0.5B^{T} (P_{t}^{T} -2{D}^{T}G_r^{ - 1}{{o}_{d}})\big)\nonumber\\
	&-2\ln({|\Omega|^{-1/2}}|\Omega^{-1}+B^{T} ({D^{T} }G_r^{ - 1}D+M_{t})B|^{-1/2})\bigg).
	\end{align}
\section{Calculation of the control distribution function}\label{appC}
\setcounter{equation}{0} 
\renewcommand{\theequation}{D.\arabic{equation}}
The form of the distribution of the optimal controller is provided in equation (\ref{eq:Eqn4}),
	\begin{equation}
	c^{*}(u_{t-1}|x_{t-1})=\frac{^Ic(u_{t-1}|x_{t-1}) \exp [-\beta(u_{t-1},x_{t-1})]}{\gamma(x_{t-1})}.
\end{equation}
From equations (\ref{eq:eq3}), (\ref{betaapp5}) and (\ref{appendb2}), we find 
\begin{align}
 &c^{*}(u_{t-1}|x_{t-1})=(2\pi)^{-1/2}|\Omega^{-1}+B^{T} ({D^{T} }G_r^{ - 1}D+M_{t})B|^{1/2}\exp\bigg[-0.5u_{t-1}^{T}\bigg(\Omega^{-1}+B^{T} ({D^{T} }G_r^{ - 1}D+M_{t})B\bigg){u_{t-1}}\nonumber\\
 &+u_{t-1}^{T} \bigg(\Omega^{-1}u_r-B^{T} ({D^{T} }G_r^{ - 1}D+M_{t}){Ax_{t-1}}
 -0.5B^{T} (P_{t}-2{{o}_{d}^{T} }G_r^{ - 1}D)^{T} \bigg) 
-0.5\bigg({\Omega ^{ - 1}}{u_r} - {B^{T} }({D^{T} }G_r^{ - 1}D + {M_t})A{x_{t -1}}\nonumber\\
& -0.5 {B^{T} }( P_{t }^{T}  - 2{D^{T} }G_r^{ - 1}{{o}_{d}} )\bigg)^T\bigg({\Omega ^{ - 1}} + {B^{T} }( {D^{T} }G_r^{ - 1}D + {M_t})B\bigg)^{-1} \bigg({\Omega ^{ - 1}}{u_r} - {B^{T} }({D^{T} }G_r^{ - 1}D + {M_t})A{x_{t -1}}\nonumber\\
& -0.5 {B^{T} }( P_{t }^{T}  - 2{D^{T} }G_r^{ - 1}{{o}_{d}} )\bigg)
\bigg].
\end{align}
After simplification we find that, 
\begin{equation}
c^{*}(u_{t-1}|x_{t-1})\sim \mathcal{N}(v_{t-1},R_{t}).
\end{equation}
where, 
\begin{align}\label{optimalcappc}
	v_{t-1}&=\bigg({\Omega ^{ - 1}} + {B^{T} }( {D^{T} }G_r^{ - 1}D + {M_t})B\bigg)^{-1} \bigg({\Omega ^{ - 1}}{u_r} - {B^{T} }({D^{T} }G_r^{ - 1}D + {M_t})A{x_{t -1}} -0.5 {B^{T} }( P_{t }^{T}  - 2{D^{T} }G_r^{ - 1}{{o}_{d}} )\bigg),
\end{align}
and,
\begin{align}
	R_t&=\bigg({\Omega ^{ - 1}} + {B^{T} }( {D^{T} }G_r^{ - 1}D + {M_t})B\bigg)^{-1}.
\end{align}
	\section{State space model for spin $\frac{1}{2}$}\label{spinn_1_2}
	\setcounter{equation}{0} 
\renewcommand{\theequation}{E.\arabic{equation}}
	The evolution of a closed spin-1/2 system in interaction with an external electric field can be described by the following master equation,
	\begin{equation}\label{vonspin}
		\dfrac{d\rho(t)}{dt}=-i[H,\rho(t)],
	\end{equation}
where, as given in equation (\ref{58_}), $H=\dfrac{1}{2}\sigma_3+\dfrac{1}{2}(\sigma_1+\sigma_2) u(t)$ such that $u(t)$ is the electric field and $\sigma_1,\sigma_2, \sigma_3$ are the Pauli matrices  whose values are given in equation (\ref{PauliMat}), repeated here, 
	\begin{equation}
		\sigma_1	=\left(
		\begin{array}{cc}
			0&1\\
			1 & 0 \\
		\end{array}
		\right),\hspace{0.4cm}	\sigma_2=\left(
		\begin{array}{cc}
			0 &-i\\
			i & 0 \\
		\end{array}
		\right),\hspace{0.4cm}\sigma_3	=\left(
		\begin{array}{cc}
			1 &0\\
			0 & -1\\
		\end{array}
		\right).
	\end{equation}
	Consequently, the Hamiltonian, $H$ can be evaluated to give, 
	\begin{equation}
		H=\dfrac{1}{2}\left(
		\begin{array}{cc}
			1 & u(t)(1-i) \\
			u(t)(1+i) & -1 \\
		\end{array}
		\right).
	\end{equation} 
Considering the form of the density operator, 
	\begin{equation}
		\rho(t)=\left(
		\begin{array}{cc}
			\rho_{00}(t) & \rho_{01}(t)\\
			\rho^*_{01}(t)  & \rho_{11}(t)
		\end{array}
		\right),
	\end{equation} 
the Liouville von-Neumann equation (\ref{vonspin}) can be re-written as,
	\begin{equation}
		\dfrac{d}{dt}\left(
		\begin{array}{cc}
			\rho_{00}(t) & \rho_{01}(t)\\
		\rho^*_{01}(t)  &\rho_{11}(t)
		\end{array}
		\right)=\dfrac{-i}{2}\left(
		\begin{array}{cc}
			u(t)\big((1-i) \rho^*_{01}(t)-(1+i) \rho_{01}(t)\big) &2\rho_{01}(t)+u(t)(1-i)(\rho_{11}(t)-\rho_{00}(t))\\
			-2\rho^*_{01}(t) +u(t)(1+i)(\rho_{00}(t) -\rho_{11}(t) ) & u(t)\big((1+i)\rho_{01}(t) -(1-i)\rho^*_{01}(t) \big)  \\
		\end{array}
		\right).
	\end{equation}
where the elements $\rho_{00}(t), \rho_{01}(t), \rho^*_{01}(t)$ and $\rho_{11}(t)$ are the elements of the density operator, $\rho(t)$. Vectorising the above equation as discussed in (\ref{x_t_vet}) yields,
	\begin{align}
	&	\dfrac{d}{dt}\underbrace{\left(
		\begin{array}{cc}
			\rho_{00}(t)\\ \rho_{11}(t)\\
			\rho_{01}(t) \\ \rho^*_{01}(t)
		\end{array}
		\right)}_{x(t)}= \underbrace{\left(
		\begin{array}{cccc}
			0 & 0 & 0 & 0\\
			0 & 0 & 0& 0\\
			0 & 0 & -i& 0 \\
			0 & 0 & 0& i\\
		\end{array}\right)}_{\tilde{A}} \underbrace{\left(
		\begin{array}{cc}
			\rho_{00}(t)\\ \rho_{11}(t)\\
			\rho_{01}(t) \\ \rho^*_{01}(t)
		\end{array}
		\right)}_{x(t)}\\&+i \underbrace{\dfrac{ 1}{2}\left(
		\begin{array}{cccc}
			0 & 0 & (1+i) & -(1-i)\\
			0 & 0 & -(1+i) & (1-i)\\
			(1-i) & -(1-i)&0 & 0 \\
			-(1+i) & (1+i)&0 & 0 \\
		\end{array}\right)}_{\tilde{N}} \underbrace{\left(
	\begin{array}{cc}
		\rho_{00}(t)\\ \rho_{11}(t)\\
			\rho_{01}(t) \\ \rho^*_{01}(t)
	\end{array}
		\right)}_{x(t)}u(t),
	\end{align}
which yields the state equation for the spin-1/2 system in the form given in equation (\ref{NLVN_pa}), repeated here,
	\begin{equation}
	\dfrac{d{x}(t)}{d t} =(\tilde{A}+i\tilde{N} u(t) ){x}(t).
	\end{equation}
	\section{State space model for spin-1}\label{spinn_1}
	\setcounter{equation}{0} 
\renewcommand{\theequation}{F.\arabic{equation}}
	The Hamiltonian describing the interaction between a spin-1 system and an electric field $u(t)$ is given by, 
	\begin{equation}
		H=H_0+H_u(t)=\left(\begin{array}{ccc}
			\dfrac{3}{2} & 0 & 0  \\
			0 & 1 & 0  \\
			0 & 0&0   \\
		\end{array}\right)+\left(\begin{array}{ccc}
			0 & 0 & 1  \\
			0 & 0 & 1  \\
			1 & 1&0   \\
		\end{array}\right) u(t)=\left(\begin{array}{ccc}
			\dfrac{3}{2} & 0 & u(t)  \\
			0 & 1& u(t)  \\
			u(t) & u(t)&0   \\
		\end{array}\right)
	\end{equation} 
Considering the form of the density operator, 
	\begin{equation}
		\rho=\left(\begin{array}{ccc}
			\rho_{00}(t) & 	\rho_{01}(t) & 	\rho_{02}(t)\  \\
			\rho^*_{01}(t) & 	\rho_{11}(t) & \rho_{12}(t)\ \\
			\rho^*_{02}(t) &  \rho^*_{12}(t) &	\rho_{22}(t)
		\end{array}\right),
	\end{equation}
 the master equation $	\dfrac{d\rho(t)}{dt}=-i[H,\rho(t)]$ can be written as, 
{\small	\begin{align}
	&	i\dfrac{d}{dt}\left(\begin{array}{ccc}
			\rho_{00}(t) & 	\rho_{01}(t) & 	\rho_{02}(t)\  \\
			\rho^*_{01}(t) & 	\rho_{11}(t) & \rho_{12}(t)\ \\
			\rho^*_{02}(t) &  \rho^*_{12}(t) &	\rho_{22}(t)
		\end{array}\right)=\\&\left(\begin{array}{ccc}
			u(t)\big(\rho^*_{02}(t) -\rho_{02}(t)\big)& \dfrac{1}{2}\rho_{01}(t)+u(t)\big( \rho^*_{12}(t)-\rho_{02}(t)\big ) & \dfrac{3}{2}\rho_{02}(t) +u(t)\big(\rho_{22}(t)-\rho_{00}(t) -\rho_{01}(t)\big)  \\\\
			-\dfrac{1}{2} \rho^*_{01}(t)+u(t)(\rho^*_{02}(t)- \rho_{12}(t))& u(t)\big( \rho_{12}^*(t)- \rho_{12}(t)\big)& \rho_{12}(t)+u(t)\big(\rho_{22}(t) -\rho_{01}^*(t)-\rho_{11}(t)\big) \\\\
			-\dfrac{3}{2} \rho_{02}^*(t) +u(t)\big(\rho_{00} +\rho_{01}^*(t)-\rho_{22}(t)\big)& - \rho_{12}^*(t)+u(t)(\rho_{01}(t)+\rho_{11}(t)-\rho_{22} (t)) & u(t)\big(\rho_{02}(t) + \rho_{12}(t)-\rho_{02}^*(t) - \rho_{12}^*(t)\big) \\
		\end{array}\right).
	\end{align}}
This implies that,
\newpage
	\begin{eqnarray}
		&\dfrac{d}{dt}\underbrace{\left(
		\begin{array}{ccccccccc}
			\rho_{00} (t) \\
		\rho_{11} (t)\\
			\rho_{22}(t) \\
			\rho_{01}(t) \\
				\rho_{02}(t)  \\
			\rho_{01} ^* (t) \\
			\rho_{02}^*(t)  \\
				\rho_{12}(t)\\
			\rho_{12}^* (t) \\
		\end{array}\right)}_{x(t)}=\underbrace{\left(
		\begin{array}{ccccccccc}
			0 & 0 & 0 & 0 & 0 & 0 & 0 & 0 & 0 \\
			0 & 0 & 0 & 0 & 0 & 0 & 0 & 0 & 0 \\
			0 & 0 & 0 & 0 & 0 & 0 & 0 & 0 & 0 \\
			0 & 0 & 0 & -\dfrac{i}{2} & 0 & 0 & 0 & 0 & 0 \\
			0 & 0 & 0 & 0 & -\dfrac{3i}{2}& 0 & 0 & 0 & 0 \\
			0 & 0 & 0 & 0 & 0 & \dfrac{i}{2} & 0 & 0 & 0 \\
			0 & 0 & 0 & 0 & 0 & 0 & \dfrac{3i}{2} & 0 & 0 \\
			0 & 0 & 0 & 0 & 0 & 0 & 0 & -i& 0 \\
			0 & 0 & 0 & 0 & 0 & 0 & 0 & 0 & i\\
		\end{array}
		\right)}_{\tilde{A}} \underbrace{\left(
		\begin{array}{ccccccccc}
			\rho_{00} (t) \\
		\rho_{11} (t)\\
			\rho_{22}(t) \\
			\rho_{01}(t) \\
				\rho_{02}(t)  \\
			\rho_{01} ^* (t) \\
			\rho_{02}^*(t)  \\
				\rho_{12}(t)\\
			\rho_{12}^* (t) \\
		\end{array}\right)}_{x(t)} \nonumber \\
&+i\underbrace{\left(
		\begin{array}{ccccccccc}
			0 & 0 & 0 & 0 & 1 & 0 & -1 & 0 & 0 \\
			0 & 0 & 0 & 0 & 0 & 0 & 0 & 1 & -1 \\
			0 & 0 & 0 & 0 & -1& 0 & 1 & -1 & 1 \\
			0 & 0 & 0 & 0 & 1 & 0 & 0 & 0 & -1 \\
			1 & 0 & -1 & 1 & 0 & 0 & 0 & 0 & 0 \\
			0 & 0 & 0 & 0 & 0 & 0 & -1 & 1 & 0 \\
			-1 & 0 & 1& 0 & 0 & -1 & 0 & 0 & 0 \\
			0 & 1 & -1 & 0 & 0 & 1 & 0 & 0 & 0 \\
			0 & -1 & 1 & -1 & 0 & 0 & 0 & 0 & 0 \\
		\end{array}
		\right)}_{\tilde{N}}\underbrace{\left(
		\begin{array}{ccccccccc}
			\rho_{00} (t) \\
		\rho_{11} (t)\\
			\rho_{22}(t) \\
			\rho_{01}(t) \\
				\rho_{02}(t)  \\
			\rho_{01} ^* (t) \\
			\rho_{02}^*(t)  \\
				\rho_{12}(t)\\
			\rho_{12}^* (t) \\
		\end{array}\right)}_{x(t)} u(t).
	\end{eqnarray}
Hence, the time evolution of the vectorised density operator can be written in the form given in (\ref{NLVN_pa}) as follows,
	\begin{equation}
	\dfrac{d{x}(t)}{d t} =(\tilde{A}+i\tilde{N} u(t) ){x}(t).
	\end{equation}

\newpage{\pagestyle{empty}\cleardoublepage}
	
\end{document}